\title{Generalisation of Chaplygin's Reducing Multiplier  Theorem with an application to multi-dimensional nonholonomic
dynamics\footnote{This research was made possible by a Georg Forster Experienced Researcher Fellowship  from the 
Alexander von Humboldt Foundation that funded a research stay of the author at TU Berlin.}}
\author{ Luis C.~Garc\'ia-Naranjo}
\numberwithin{equation}{section}
\numberwithin{table}{section}
\numberwithin{figure}{section}
\newtheorem{theorem}{Theorem}[section]
\newtheorem{lemma}[theorem]{Lemma}
\newtheorem{proposition}[theorem]{Proposition}
\newtheorem{corollary}[theorem]{Corollary}
\theoremstyle{definition}
\newtheorem{remark}[theorem]{Remark}
\newtheorem*{remarks*}{Remarks}
\providecommand{\customgenericname}{}
\newcommand{\newcustomtheorem}[2]{%
  \newenvironment{#1}[1]
  {%
   \renewcommand\customgenericname{#2}%
   \renewcommand\theinnercustomgeneric{##1}%
   \innercustomgeneric
  }
  {\endinnercustomgeneric}
}
\newcommand{\defn}[1]{{\bfseries\itshape{#1}}}
\def\headcolour{\color{Grey}}
\headcolour\textsc{L.C.~Garc\'ia-Naranjo}]{\headcolour\textsc{Generalisation of Chaplygin's Reducing Multiplier  Theorem}}
\def\restr#1{\,\vrule height1.2ex width.4pt
  depth0.8ex\lower0.4ex\hbox{\scriptsize $\,#1$}}
\newcommand{\R}{\mathbb{R}}
\newcommand{\I}{\mathbb{I}}
\newcommand{\J}{\mathbb{J}}
\newcommand{\g}{\mathfrak{g}}
\newcommand{\so}{\mathfrak{so}}
\newcommand{\SO}{\mathrm{SO}}
\newcommand{\D}{\mathcal{D}}
\newcommand{\tr}{\mathop\mathrm{tr}\nolimits}
\begin{document}

\maketitle

\begin{abstract}
A generalisation of Chaplygin's Reducing Multiplier  Theorem is given by providing  sufficient conditions for the
Hamiltonisation of Chaplygin nonholonomic
systems with an arbitrary number $r$ of degrees of freedom via Chaplygin's multiplier method. 
The crucial point in the construction is to  add an  hypothesis  
of geometric nature that controls the interplay between the kinetic energy metric and the non-integrability of the constraint distribution.
Such hypothesis can  be systematically examined
in concrete examples, and  is 
automatically satisfied in the case $r= 2$ encountered in the original formulation of Chaplygin's theorem. Our results are applied to prove
 the Hamiltonisation of 
a multi-dimensional generalisation of the problem of a symmetric rigid body with 
a flat face that rolls without slipping or spinning over a sphere.

\end{abstract}

{\small
\tableofcontents
}

\section{Introduction}

A substantial amount of research in nonholonomic mechanics in recent years has focused on Hamiltonisation (see e.g. \cite{BorMamChap,EhlersKoiller,FedJov, FassoRank2,BorMamHam, FedJov2, Jovan,LGN10,BolsBorMam2011, BalseiroGN,BolsBorMam2015}
and the references therein).
Roughly speaking this is the process by which, via symmetry reduction and a time reparametrisation, the equations of motion of certain  nonholonomic systems take a Hamiltonian form. 

The Hamiltonisation process has received special attention for the  the so-called Chaplygin systems, which are nonholonomic
systems with a specific type of symmetry~(defined in Section~\ref{S:Prelim}). The 
reduced equations  of motion for these systems
take the form of an unconstrained mechanical system subject to nonholonomic reaction forces of {\em gyroscopic} type.
According to \cite{FedJov},  it was Appel~\cite{Appel} who first proposed the idea of introducing a time reparametrisation to 
eliminate these forces, and the idea was taken up by Chaplygin who introduced the reducing multiplier method, and  proved his celebrated Reducing Multiplier Theorem \cite{ChapRedMult}.
The theorem states that if the reduced space has {\em two} degrees of freedom, the reduced equations may be written 
in Hamiltonian form after a time reparametrisation if and only if  there exists an invariant measure.
 This result has received enormous attention in the community of nonholonomic systems (see e.g.~\cite{FedJov,EhlersKoiller, BolsBorMam2015} and the references therein) and, 
despite the low-dimensional restriction on the dimension of the reduced space, it remains to be one of  the most solid theoretical results
 in the area of Hamiltonisation.

On the other hand, in recent years Fedorov and Jovanovi\'c have found a remarkable class of examples with arbitrary number
of degrees of freedom that allow a Hamiltonisation by  Chaplygin's method~\cite{FedJov, FedJov2, Jovan, JovaRubber, Jova18}.  
The treatment of  the authors in of all these examples  is independent of  previous theoretical efforts to generalise Chaplygin's Theorem (e.g. \cite{Iliev1985,Stanchenko,BorMamHam,Fernandez})
 and  the  underlying mechanism responsible for the Hamiltonisation of nonholonomic systems with arbitrary number of degrees of freedom
remained  a mystery. 

In this paper we present a generalisation of Chaplygin's Theorem that gives sufficient 
conditions for Hamiltonisation via Chaplygin's method for Chaplygin systems whose reduced space has an arbitrary number $r$ of degrees of
freedom. The crucial point  is to  add   hypothesis  \ref{hyp} (see section~\ref{S:theo})
which is of geometric nature and  controls the interplay between the kinetic energy metric and the non-integrability of the constraint distribution. This condition 
can be systematically analysed in concrete examples and is automatically satisfied in the case $r=2$ considered by Chaplygin. 

The  usefulness  of our generalisation is illustrated by explicitly applying it to prove the 
 Hamiltonisation of a concrete multi-dimensional nonholonomic system.
 Such system  consists of an $n$-dimensional  symmetric rigid body with 
a flat face that rolls without slipping or spinning over an $n-1$-dimensional sphere. This provides a new example of a Hamiltonisable nonholonomic Chaplygin  
system with arbitrary degrees of freedom.

\vspace{0.1cm}
\noindent {\bf Geometric interpretation and scope of the results}

After the first version of this paper was made available online, some works have appeared  that
further clarify the underlying  geometry and the relevance of the results in this paper.

The first is  Gaji\'c and Jovanovi\'c~\cite{Gajic}, that presents a novel application
of Chaplygin's reducing multiplier  beyond the nonholonomic setting.

The second is  Garc\'ia-Naranjo and Marrero~\cite{LGN-Marrero}, that continues the research started here 
taking an intrinsic geometric perspective, which shows that our main  result is a reformulation of 
Stanchenko~\cite[Proposition 2]{Stanchenko}
and Cantrijn et al.~\cite[Equation (18)]{CaCoLeMa}.
However, the  approach followed in the present paper, and continued in~\cite{LGN-Marrero},  seems to be more convenient to study
  concrete examples. In fact,~\cite{LGN-Marrero} also proves that the  
Hamiltonisation of the multi-dimensional Veselova problem established in~\cite{FedJov, FedJov2} may be 
explained in the light
of the results of this paper.

We finally mention Garc\'ia-Naranjo~\cite{LGN2019} that applies the results of this paper to establish the Hamiltonisation
of the multi-dimensional rubber Routh sphere.

\vspace{0.1cm}
\noindent {\bf Structure of the paper}

Section~\ref{S:Prelim} introduces the notation and recalls known results on Chaplygin systems.
The formulation and proof of our results is given in Section~\ref{S:theo}.  In section~\ref{S:Example} these are applied
 to prove the Hamiltonisation of our example.

\section{Preliminaries}
\label{S:Prelim}

A nonholonomic system consists of a triple $(Q,\D,L)$ where $Q$ is an  $n$-dimensional configuration manifold, 
 $\D\subset TQ$ is a  rank $r$ non-integrable distribution
on $Q$ that models $n-r$ independent linear constraints on the velocities,  and the Lagrangian $L:TQ\to \R$ is assumed to be of 
mechanical form, $L=K-U$,
 where the kinetic energy $K$ defines a Riemannian metric $\langle \cdot , \cdot \rangle$ on $Q$ and $U:Q\to \R$ is the potential energy.

This paper is concerned with nonholonomic $G$-\defn{Chaplygin systems}, or simply  a \defn{Chaplygin systems}, which are nonholonomic systems $(Q,\D,L)$ with the additional property that 
 there is a Lie group $G$ acting freely and properly on $Q$, satisfying the following properties:
\begin{enumerate}
\item $G$ acts by isometries on $Q$ and the potential energy $U$ is $G$-invariant,
\item $\D$ is $G$-invariant in the sense that $T_q\Phi_g(\D_q)=\D_{\Phi_g( q)}$, for all $g\in G$ where $\Phi_g: Q\to Q$ is the action
diffeomorphism defined by $g$,
\item for every $q\in Q$  the following direct sum splitting holds
\begin{equation}
\label{eq:chap-condition}
T_qQ=\D_q \oplus ( \g \cdot q),
\end{equation}
where $ \g$ denotes the Lie algebra of $G$ and $\g \cdot q$  the tangent space to the $G$-orbit through $q$.
\end{enumerate}

These systems often appear in applications and have been studied by several authors, e.g.
 \cite{Stanchenko,Koi, BKMM,  CaCoLeMa,EhlersKoiller,BM2015}.
The reduced configuration manifold $S:=Q/G$ is called the \defn{shape space}. Note that  
because of \eqref{eq:chap-condition} the dimension of $S$
coincides with the rank $r$ of $\D$, and the dimension of $G$ is $n-r$.

As first explained by Koiller \cite{Koi}, for  Chaplygin systems the constraint distribution 
$\D$ may be interpreted as the horizontal space of a principal connection on the
 principal $G$-bundle $\pi : Q\to S$, and the symmetry leads
to a reduced  system on the space 
$\D/G$ which is isomorphic to  $TS$. The reduced equations on $TS$ take the form 
of an unconstrained mechanical system
on $S$ subject to a {\em gyroscopic} force:
\begin{equation}
\label{P:LagTS}
\frac{d}{dt}\left ( \frac{\partial \ell}{\partial \dot s^i} \right )  - \frac{\partial \ell}{\partial s^i}= -\sum_{j,k=1}^rC^k_{ij}(s)\dot s^j \frac{\partial \ell}{\partial \dot s^k}, \qquad i=1,\dots, r.
\end{equation}
We now proceed to define the objects in the above equations. First, 
 $(s^1, \dots, s^r)$ are local coordinates on  $S$. Next,   
 $\ell:TS\to \R$ is the \defn{reduced Lagrangian} defined by
\begin{equation}
\label{eq:def-reduced-lag}
\ell(s,\dot s)= L\left (q,\mbox{hor}_q (\dot s) \right ), \qquad \mbox{where} \quad q\in \pi^{-1}(s),
\end{equation}
and $\mbox{hor}_q (\dot s)$ denotes the \defn{horizontal lift}  of $\dot s \in T_sS$ at $q$,  which is the tangent vector in $T_qQ$  characterised by 
the conditions that  $\mbox{hor}_q (\dot s)\in \D_q$ and $T_q\pi (\mbox{hor}_q (\dot s)) = \dot s$. The reduced Lagrangian  is 
locally  written as
\begin{equation}
\ell(s,\dot s)= \frac{1}{2}\sum_{i,j=1}^r K_{ij}(s)\dot s^i \dot s^j -U(s),
\end{equation}
where  $U:S\to \R$ now denotes the reduction of the $G$-invariant potential on $Q$, and  the coefficients $K_{ij}(s)$ are given by
\begin{equation*}
K_{ij}(s) = \left \langle  \mbox{  hor}_q \left ( \frac{\partial}{\partial s^i}  \right ) \,  ,  \,  \mbox{  hor}_q \left ( \frac{\partial}{\partial s^j}  \right )
\right \rangle_q, \qquad q\in \pi^{-1}(s),
\end{equation*}
and define a Riemannian metric on $S$. As it is standard, we will denote by $K^{ij}(s)$ the entries of the corresponding inverse
matrix, i.e.
\begin{equation*}
\sum_{k=1}^rK_{ik}(s)K^{kj}(s) = \delta_i^j, 
\end{equation*}
for all $i,j=1, \dots, r$, where here, and throughout, the symbol $\delta$ is reserved for the Kronecker delta.

Finally, the $s$ dependent coefficients $C_{ij}^k$, $i,j,k\in \{1,\dots, r\}$, in~\eqref{P:LagTS} are locally given by
\begin{equation}
\label{eq:Coeffs}
C_{ij}^k(s) = \sum_{l=1}^r K^{kl}(s) \left \langle  \left [ \, \mbox{ hor}_q \left ( \frac{\partial}{\partial s^i} \right ) \, , \,  \mbox{ hor}_q \left ( \frac{\partial}{\partial s^j} \right )\, \right   ] \, , \, 
    \mbox{ hor}_q \left ( \frac{\partial}{\partial s^l} \right ) \right \rangle_q, \qquad q\in \pi^{-1}(s),
\end{equation}
where $[\cdot, \cdot ]$ denotes the commutator of vector fields on $Q$.  Note that they are
 skew-symmetric on the lower indices, $C_{ij}^k=-C_{ji}^k$, and as a consequence the energy 
 $E=\sum_{i=1}^r \frac{\partial \ell}{\partial \dot s^i} \dot s^i-\ell$
is preserved. Inspired by this property we will refer to $C_{ij}^k$ as the \defn{  gyroscopic coefficients}.

The gyroscopic coefficients are central to this work. They are 
 the coordinate 
representation of a $(1,2)$ tensor field on $S$ that we call the 
\defn{gyroscopic tensor}
\begin{equation*}
\mathcal{T}= \sum_{i,j,k=1}^rC_{ij}^k(s) \, ds^i\otimes ds^j \otimes \frac{\partial}{\partial s^k}.
\end{equation*}
An intrinsic definition of this tensor, together with a geometric study of its
properties in relation to Chaplygin systems is given in Garc\'ia-Naranjo and Marrero~\cite{LGN-Marrero}.\footnote{The 
gyroscopic tensor actually coincides, up to a sign, with the tensor field $C$ considered by Koiller~\cite[Proposition 8.5]{Koi}
and Cantrijn et al~\cite[Page 337]{CaCoLeMa} (see~\cite{LGN-Marrero}).}

For the rest of the paper we will take a Hamiltonian approach. 
Define the momenta   $p_i :=\frac{\partial \ell}{\partial \dot s^i}$, 
so that   $(s^1, \dots , s^r, p_1, \dots p_r)$ are canonical coordinates for the cotangent bundle $T^*S$.  Denote by $H$ the reduced Hamiltonian:
\begin{equation}
\label{eq:Ham}
H:T^*S\to \R, \qquad H(s,p)=\sum_{i=1}^r p_i \dot s^i-\ell= \frac{1}{2}\sum_{i,j=1}^rK^{ij}(s)p_ip_j+U(s).
\end{equation}
Equations~\eqref{P:LagTS} may  be rewritten as the 
following first order system on  $T^*S$:\footnote{The form of the equations~\eqref{P:LagTS} and~\eqref{eq:reduced-Hamilt}  
indicates that there is an interesting connection
between the gyroscopic coefficients and the structure coefficients of other sophisticated geometric frameworks
 that have been developed
 to formulate the equations of motion of nonholonomic systems~\cite{Grab,Leon}.}
\begin{equation}
\label{eq:reduced-Hamilt}
\frac{ds^i}{dt} = \frac{\partial H}{\partial p_i}, \qquad  \frac{dp_i}{dt} =- \frac{\partial H}{\partial s^i}- \sum_{j,k=1}^rC_{ij}^kp_k  \frac{\partial H}{\partial p_j}, \qquad i=1,\dots, r.
\end{equation}

\subsection{Invariant measures for Chaplygin systems}

The existence of a smooth invariant measure for Eqns.~\eqref{eq:reduced-Hamilt}   is 
intimately related to the condition that a certain 1-form $\Theta$ on $S$ is exact  \cite{CaCoLeMa, FedGN2015}. A local expression for
  $\Theta$ may be given in terms of the gyroscopic coefficients by:
$$
\Theta :=  -\sum_{i,j=1}^r C_{ij}^j  \, ds^i.
$$

In the following theorem recall that   $T^*S$ is equipped with the \defn{ Liouville measure} $\nu$, that in local bundle coordinates is
given by $\nu =ds^1 \cdots ds^r \, dp_1 \cdots dp_r$. Recall also that a volume form $\mu$ on $T^*S$ is called a \defn{basic measure} if its density with respect to 
$\nu$ does not depend  on the momenta $p_j$.
\begin{theorem}[Cantrijn et al. \cite{CaCoLeMa}]
\label{T:measures-1-form}
Denote by $\nu$  the Liouville volume form on $T^*S$
and consider a   a mechanical Hamiltonian $H$ given as in~\eqref{eq:Ham}.  The  reduced  equations of motion 
\eqref{eq:reduced-Hamilt}   preserve the basic measure
\begin{equation}
\label{eq:measure}
\mu = \exp (\sigma) \, \nu, \qquad \sigma\in C^\infty(S),
\end{equation}
 if and only if $\Theta$ is exact with $\Theta =d\sigma$.
\end{theorem}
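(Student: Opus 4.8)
The plan is to regard the first-order system~\eqref{eq:reduced-Hamilt} as (the integral curves of) a vector field $X = \sum_{i} X^{s^i}\,\partial/\partial s^i + \sum_{i} X^{p_i}\,\partial/\partial p_i$ on $T^*S$, with $X^{s^i}=\partial H/\partial p_i$ and $X^{p_i}=-\partial H/\partial s^i-\sum_{j,k}C_{ij}^kp_k\,\partial H/\partial p_j$, and to use the elementary fact that a volume form $\mu$ is invariant under the flow of $X$ exactly when $\mathcal L_X\mu=0$. Writing $\mu=\exp(\sigma)\,\nu$ with $\sigma\in C^\infty(S)$ and recalling that $\nu$ has constant density in the canonical coordinates $(s,p)$, the identity $\mathcal L_X(e^\sigma\nu)=\bigl(X(\sigma)+\operatorname{div}_\nu X\bigr)e^\sigma\nu$ reduces the whole statement to the single scalar equation $X(\sigma)+\operatorname{div}_\nu X=0$, where $\operatorname{div}_\nu X=\sum_i\partial X^{s^i}/\partial s^i+\sum_i\partial X^{p_i}/\partial p_i$ is the coordinate divergence.

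The computational heart is $\operatorname{div}_\nu X$. The contributions of the canonical Hamiltonian part, i.e. the $\partial H/\partial p_i$ in the $\dot s^i$ equation and the $-\partial H/\partial s^i$ in the $\dot p_i$ equation, cancel by equality of mixed second partials (this is just Liouville's theorem for the Hamiltonian vector field), so only the gyroscopic term survives and
\begin{equation*}
\operatorname{div}_\nu X = -\sum_{i,j,k}C_{ij}^k\,\frac{\partial}{\partial p_i}\!\left(p_k\,\frac{\partial H}{\partial p_j}\right) = -\sum_{i,j}C_{ij}^i\,\frac{\partial H}{\partial p_j}\;-\;\sum_{i,j,k}C_{ij}^k\,p_k\,\frac{\partial^2 H}{\partial p_i\,\partial p_j}.
\end{equation*}
Two simplifications then occur. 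In the last sum the skew-symmetric coefficients $C_{ij}^k$ are paired over $i,j$ with the symmetric Hessian $\partial^2H/\partial p_i\partial p_j$, so that term vanishes identically. In the first sum, the skew-symmetry $C_{ij}^k=-C_{ji}^k$ together with the definition $\Theta=-\sum_{i,j}C_{ij}^j\,ds^i$ gives $\sum_iC_{ij}^i=\Theta_j$ (where $\Theta=\sum_i\Theta_i\,ds^i$), whence $\operatorname{div}_\nu X=-\sum_i\Theta_i\,\partial H/\partial p_i$.

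Since $\sigma$ depends only on $s$, one has $X(\sigma)=\sum_i(\partial\sigma/\partial s^i)\,\partial H/\partial p_i$, and the invariance condition becomes
\begin{equation*}
\sum_{i=1}^r\left(\frac{\partial\sigma}{\partial s^i}-\Theta_i\right)\frac{\partial H}{\partial p_i}=0 \qquad\text{at every point of }T^*S.
\end{equation*}
I expect the one genuinely delicate point to be the final equivalence with $\Theta=d\sigma$: the displayed expression is linear in the fibre coordinate through $\partial H/\partial p_i=\sum_jK^{ij}(s)p_j$, and the forward (``only if'') direction requires it to vanish \emph{identically} on $T^*S$, not merely along a single orbit. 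Here the mechanical form of $H$ is decisive—invertibility of the kinetic matrix $K^{ij}$ means that, as $p$ ranges over $T_s^*S$, the vector $(\partial H/\partial p_1,\dots,\partial H/\partial p_r)$ sweeps out all of $\R^r$, so the coefficients must vanish separately, giving $\partial\sigma/\partial s^i=\Theta_i$ for all $i$, i.e. $\Theta=d\sigma$. The converse follows at once by reading the same computation backwards, so $\mu$ is preserved if and only if $\Theta$ is exact with $\Theta=d\sigma$.
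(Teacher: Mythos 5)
Your proposal is correct and follows essentially the same route as the paper's proof: both reduce invariance to the vanishing of a single divergence expression, use the skew-symmetry $C_{ij}^k=-C_{ji}^k$ against the symmetric Hessian $\partial^2H/\partial p_i\partial p_j$ to kill the second-order term, and invoke the invertibility of $K^{ij}$ so that the linear-in-$p$ identity forces $\partial\sigma/\partial s^i=\Theta_i$ for each $i$. The only difference is cosmetic: you phrase the computation via $\mathcal{L}_X\mu$ and $\operatorname{div}_\nu X$, while the paper writes out the coordinate divergence of $e^{\sigma}$ times the components directly.
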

The proof of the main  result of this paper (Theorem~\ref{th:Chaplygin-Gen}
 in section~\ref{S:theo} below), uses the
fact that $\Theta=d\sigma$ is a {\em necessary} condition for the invariance of the measure $\mu$ given by~\eqref{eq:measure}.

\begin{proof}
 In local bundle coordinates, 
$$\mu = \exp(\sigma (s)) \, ds^1 \cdots ds^r \, dp_1 \cdots dp_r.$$
In view of \eqref{eq:reduced-Hamilt}, the condition for $\mu$ to be invariant is that 
\begin{equation*}
\begin{split}
0&=\sum_{i=1}^r \left [  \frac{\partial}{\partial s^i} \left (  \exp(\sigma (s)) \frac{\partial H}{\partial p_i} \right )
-  \frac{\partial}{\partial p_i}\left (  \exp(\sigma (s))  \left ( \frac{\partial H}{\partial q^i}+\sum_{j,k=1}^r C_{ij}^kp_k  \frac{\partial H}{\partial p_j} \right )  \right )
\right ] \\
&=  \exp(\sigma (s))\sum_{i=1}^r  \left (  \frac{\partial \sigma }{\partial s^i} +\sum_{j=1}^r C_{ij}^j \right )  \frac{\partial H}{\partial p_i},   
\end{split}
\end{equation*}
where we have used the skew-symmetry on the lower indices of the coefficients $C_{ij}^k$ to cancel the terms involving second derivatives with respect to the momenta.
Since the above equality should  hold for arbitrary $(s^1, \dots , s^r, p_1, \dots p_r)$, 
and \eqref{eq:Ham} implies $ \frac{\partial H}{\partial p_i}=\sum_{j=1}^rK^{ij}(s)p_j$ where   $K^{ij}(s)$ are
the coefficients of an invertible matrix,   then necessarily
\begin{equation}
\label{eq:Inv-Measure}
 \frac{\partial \sigma }{\partial s^i} = -\sum_{j=1}^r C_{ij}^j , \qquad i=1, \dots, r,
\end{equation}
and   $\Theta =d\sigma$. The sufficiency of this condition follows immediately from the above analysis.
\end{proof}
  
\subsection{Chaplygin's Reducing Multiplier Method and Theorem}

Chaplygin's reducing multiplier method attempts to find a smooth function  $\phi:S\to \R$ such that,  after the time and momentum reparametrisation 
\begin{equation*}
dt = \exp( - \phi (s))\, d\tau, \qquad  p_i= \exp(- \phi (s))\,\tilde p_i, \qquad i=1,\dots, r,
\end{equation*}
 the equations of motion~\eqref{eq:reduced-Hamilt} transform into
Hamiltonian form:
\begin{equation}
\label{eq:HamFormResc}
\frac{ds^i}{d\tau} = \frac{\partial   \tilde H}{\partial \tilde p_i}, \qquad  \frac{d \tilde p_i}{d\tau} =- \frac{\partial \tilde H}{\partial s^i}, \qquad i=1,\dots, r,
\end{equation}
where $\tilde H(s_i, \tilde p_i)=H(s_i, \exp(- \phi (s))\,\tilde p_i)$. The process described above is often termed \defn{Chaplygin Hamiltonisation}. The contribution of this paper is to give sufficient conditions for the existence of $\phi$ that
can be systematically examined in concrete examples. Here we recall two well-established results. The first one is that 
the existence of a basic, smooth invariant measure is a necessary condition for Chaplygin Hamiltonisation (see 
 e.g.  \cite{FedJov} and \cite{EhlersKoiller}).

\begin{proposition}
\label{P:measure}
Suppose that a nonholonomic Chaplygin system allows a 
  Chaplygin Hamiltonisation by  the time and momentum reparametrisation 
\begin{equation*}
dt = \exp( - \phi (s))\, d\tau, \qquad  p_i= \exp(- \phi (s))\,\tilde p_i, \qquad i=1,\dots, r.
\end{equation*}
Then, its reduced equations of motion~\eqref{eq:reduced-Hamilt} possess the invariant measure $\mu =\exp(\sigma (s)) \, \nu$, where
$\nu$ is the Liouville volume form on $T^*S$ and $\sigma=(r-1)\phi$.
\end{proposition}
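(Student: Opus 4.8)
The engine of the argument is Liouville's theorem: since the reparametrised system \eqref{eq:HamFormResc} is Hamiltonian, its flow preserves the canonical Liouville volume $\tilde\nu = ds^1\cdots ds^r\, d\tilde p_1\cdots d\tilde p_r$ on the target $T^*S$. The plan is to transport this invariant measure back to the original coordinates and time, and the whole content of the proposition lies in keeping careful track of how the two ingredients of the reparametrisation — the fibrewise momentum rescaling and the conformal rescaling of time — each act on a volume form. To organise this, I would factor the transformation as the composition of the diffeomorphism $\Psi\colon (s,p)\mapsto (s,\tilde p)$ with $\tilde p_i = \exp(\phi(s))\,p_i$, followed by the time change encoded by $\frac{d}{d\tau}=\exp(-\phi)\frac{d}{dt}$.

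Concretely, write $X$ for the vector field on $T^*S$ defined by \eqref{eq:reduced-Hamilt} and let $Y:=\Psi_*X$ be its image under the momentum rescaling, expressed in the coordinates $(s,\tilde p)$ and still parametrised by $t$. By the time change the $\tau$-evolution field is the Hamiltonian field $\tilde X=\exp(-\phi)\,Y$, so that $\mathcal{L}_{\tilde X}\tilde\nu=0$. First I would use the identity $\mathcal{L}_{fY}\tilde\nu=\bigl(f\operatorname{div}_{\tilde\nu}Y + Y(f)\bigr)\tilde\nu$ with $f=\exp(-\phi)$; since $\phi=\phi(s)$ is unchanged by $\Psi$, this turns the invariance of $\tilde\nu$ under $\tilde X$ into the relation $\operatorname{div}_{\tilde\nu}Y=Y(\phi)$, which is exactly the statement that $Y$ preserves $\exp(-\phi)\,\tilde\nu$. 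Pulling this measure back by $\Psi$ then yields an invariant measure for $X$, because $\Psi_*X=Y$ gives $\mathcal{L}_X\Psi^*\bigl(\exp(-\phi)\tilde\nu\bigr)=\Psi^*\mathcal{L}_Y\bigl(\exp(-\phi)\tilde\nu\bigr)=0$.

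It remains to compute $\Psi^*\bigl(\exp(-\phi)\tilde\nu\bigr)$. Since $\Psi^*(d\tilde p_i)=\exp(\phi)\,dp_i+\exp(\phi)\,p_i\,d\phi$ and $d\phi$ is a combination of the $ds^j$, every term containing $d\phi$ is annihilated upon wedging with $ds^1\wedge\cdots\wedge ds^r$; hence $\Psi^*\tilde\nu=\exp(r\phi)\,\nu$, and therefore $\Psi^*\bigl(\exp(-\phi)\tilde\nu\bigr)=\exp((r-1)\phi)\,\nu$. This identifies the invariant measure of \eqref{eq:reduced-Hamilt} as $\mu=\exp(\sigma)\,\nu$ with $\sigma=(r-1)\phi$, as claimed.

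The one genuine subtlety — and the place where the factor $(r-1)$ rather than $r$ appears — is the effect of the time reparametrisation. A naive substitution $\tilde p=\exp(\phi)p$ into the preserved volume would produce the Jacobian factor $\exp(r\phi)\,\nu$; the missing power of $\exp(\phi)$ is precisely the density correction forced by rescaling the generating vector field by the scalar $\exp(-\phi)$, captured by the middle step above. I would therefore present the $\mathcal{L}_{fY}$ identity as the key point and treat the Jacobian computation as routine. As a cross-check, one could instead verify the conclusion against Theorem~\ref{T:measures-1-form} by showing $\Theta=d\bigl((r-1)\phi\bigr)$, but extracting the requisite relation between $\phi$ and the gyroscopic coefficients from the Hamiltonisation hypothesis is less direct than the transport argument above.
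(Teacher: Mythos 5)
Your proof is correct and follows essentially the same route as the paper: apply Liouville's theorem to the reparametrised Hamiltonian system, pick up the Jacobian factor $\exp(r\phi)$ from the momentum rescaling, and the extra factor $\exp(-\phi)$ from the time change, giving $\sigma=(r-1)\phi$. You merely make explicit, via the identity $\mathcal{L}_{fY}\tilde\nu=\bigl(f\operatorname{div}_{\tilde\nu}Y+Y(f)\bigr)\tilde\nu$, the step that the paper's two-line proof leaves implicit when passing from the $\tau$-flow back to the $t$-flow.
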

\begin{proof}
By Liouville's Theorem, the transformed equations~\eqref{eq:HamFormResc} preserve the measure
\begin{equation*}
\begin{split}
\tilde \mu &= ds^1\wedge \cdots  \wedge ds^r  \wedge  d\tilde p_1 \wedge   \cdots   \wedge d\tilde p_r \\
&= \exp( r \phi (s)) \, ds^1 \wedge \cdots  \wedge ds^r  \wedge d p_1\wedge  \cdots  \wedge d p_r. 
\end{split}
\end{equation*}
Therefore, the equations in the original time variable $t$ preserve the measure $\mu=  \exp( - \phi (s))  \tilde \mu$.
\end{proof}

The celebrated  Chaplygin's  Reducing Multiplier Theorem establishes that if the dimension of the shape space  is 2, the 
existence of the invariant measure $\mu =\exp(\sigma (s)) \, \nu$ is not only necessary, but also sufficient for Chaplygin Hamiltonisation.
More precisely:

\begin{theorem}[Chaplygin's Reducing Multiplier Theorem \cite{ChapRedMult}]
\label{th:Chaplygin}
Suppose that $r=2$ and that the reduced equations  \eqref{eq:reduced-Hamilt} possess the invariant measure 
\begin{equation*}
\mu =\exp(\sigma (s)) \, \nu,  \qquad \sigma\in C^\infty(S),
\end{equation*}
where $\nu$ is the Liouville volume form on $T^*S$.
Then after the time and momentum reparametrisation 
\begin{equation*}
dt = \exp( - \sigma (s))\, d\tau, \qquad  p_i= \exp(- \sigma (s))\,\tilde p_i, \qquad i=1,2,
\end{equation*}
the equations \eqref{eq:reduced-Hamilt} transform to Hamiltonian form
\begin{equation*}
\frac{ds^i}{d\tau}= \frac{\partial \tilde H}{\partial \tilde p_i}, \qquad \frac{d \tilde p_i}{d\tau}= -\frac{\partial \tilde H}{\partial  s_i}, \qquad i=1,2,
\end{equation*}
where $\tilde H (s^i,\tilde p_i) =H(s^i,\exp( - \sigma (s))\tilde p_i)$.
\end{theorem}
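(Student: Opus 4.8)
The plan is to perform the stated substitution directly and check that the resulting system is canonical, using the invariant measure to annihilate the non-Hamiltonian terms. Write $\lambda := \exp(-\sigma)$, so that $p_i = \lambda\, \tilde p_i$ and $dt = \lambda\, d\tau$, whence $\frac{d}{d\tau} = \lambda\, \frac{d}{dt}$. First I would record the two chain-rule identities coming from $\tilde H(s,\tilde p) = H(s,\lambda(s)\tilde p)$, namely
$$
\frac{\partial \tilde H}{\partial \tilde p_i} = \lambda\, \frac{\partial H}{\partial p_i}, \qquad \frac{\partial \tilde H}{\partial s^i} = \frac{\partial H}{\partial s^i} - \frac{\partial \sigma}{\partial s^i}\sum_{j=1}^r p_j\,\frac{\partial H}{\partial p_j},
$$
where the right-hand sides are understood at $p = \lambda\tilde p$, and where in the second identity I have used $\partial \lambda/\partial s^i = -\lambda\,\partial\sigma/\partial s^i$ together with $\lambda\tilde p_j = p_j$. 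With the first identity the position equation is immediate: from the first of~\eqref{eq:reduced-Hamilt}, $\frac{ds^i}{d\tau} = \lambda\,\frac{ds^i}{dt} = \lambda\,\frac{\partial H}{\partial p_i} = \frac{\partial \tilde H}{\partial \tilde p_i}$.

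For the momentum equation I would compute $\frac{d\tilde p_i}{d\tau} = \lambda\,\frac{d}{dt}\!\left(\exp(\sigma)\,p_i\right)$ directly, inserting $\frac{ds^k}{dt} = \partial H/\partial p_k$ and the second equation of~\eqref{eq:reduced-Hamilt}. Since $\lambda\exp(\sigma) = 1$, this produces
$$
\frac{d\tilde p_i}{d\tau} = p_i \sum_{k=1}^r \frac{\partial \sigma}{\partial s^k}\,\frac{\partial H}{\partial p_k} - \frac{\partial H}{\partial s^i} - \sum_{j,k=1}^r C_{ij}^k\, p_k\, \frac{\partial H}{\partial p_j}.
$$
Comparing with $-\partial \tilde H/\partial s^i$ from the previous display, the sought relation $\frac{d\tilde p_i}{d\tau} = -\partial \tilde H/\partial s^i$ is equivalent, after cancelling $-\partial H/\partial s^i$, to the vanishing for each $i$ of the quadratic form
$$
\sum_{j=1}^r \frac{\partial H}{\partial p_j}\left[\, \frac{\partial\sigma}{\partial s^j}\,p_i - \frac{\partial\sigma}{\partial s^i}\,p_j - \sum_{k=1}^r C_{ij}^k\, p_k \,\right].
$$

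The heart of the matter, and the step where $r=2$ is indispensable, is to show that the bracket $B_{ij} := \frac{\partial\sigma}{\partial s^j}\,p_i - \frac{\partial\sigma}{\partial s^i}\,p_j - \sum_k C_{ij}^k\, p_k$ vanishes identically. The diagonal brackets $B_{ii}$ vanish for any $r$ by the skew-symmetry $C_{ii}^k = 0$. When $r=2$ the only remaining brackets are $B_{12}$ and $B_{21} = -B_{12}$, and here I would invoke Theorem~\ref{T:measures-1-form}: the existence of the invariant measure $\exp(\sigma)\nu$ forces $\Theta = d\sigma$, which via~\eqref{eq:Inv-Measure} and the skew-symmetry yields precisely $\partial\sigma/\partial s^1 = -C_{12}^2$ and $\partial\sigma/\partial s^2 = C_{12}^1$. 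Substituting these two relations into $B_{12} = \frac{\partial\sigma}{\partial s^2}p_1 - \frac{\partial\sigma}{\partial s^1}p_2 - C_{12}^1 p_1 - C_{12}^2 p_2$ gives an exact cancellation, so $B_{12}\equiv 0$; hence the quadratic form above vanishes for every $i$, and the momentum equation is canonical.

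I expect this cancellation to be the decisive point, precisely because it is special to $r=2$. For $r\geq 3$ the off-diagonal coefficients $C_{ij}^k$ with $k\neq i,j$ enter $B_{ij}$ but are \emph{not} controlled by the single contracted relation~\eqref{eq:Inv-Measure}; consequently $B_{ij}$ need not vanish and the bare existence of an invariant measure no longer guarantees term-by-term cancellation in the quadratic form. This is exactly the gap that the geometric hypothesis~\ref{hyp} of the main theorem is designed to close, and it explains why no additional condition is needed in the classical two-dimensional case treated here.
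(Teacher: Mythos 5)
Your proof is correct and takes essentially the same route as the paper: the paper obtains Theorem~\ref{th:Chaplygin} as the $r=2$ instance of Theorem~\ref{th:Chaplygin-Gen}, whose proof is precisely your chain-rule computation (your bracket $B_{ij}$ is, up to overall factors of $\exp(-\sigma)$, the bracket in~\eqref{eq:general-after-chain-rule} with $\tfrac{1}{1-r}=-1$) combined with the necessity direction of Theorem~\ref{T:measures-1-form}. Your closing remark on why the cancellation fails for $r\geq 3$ and is restored by hypothesis~\ref{hyp} matches the paper's own motivation exactly.
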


The theorem of Chaplygin is a particular instance of  our main Theorem~\ref{th:Chaplygin-Gen} given below.

\section{Generalisation of Chaplygin's Reducing Multiplier Theorem}
\label{S:theo} 

The generalisation of Chaplygin's Theorem~\ref{th:Chaplygin} that we present gives sufficient conditions for the Hamiltonisation
of Chaplygin systems whose shape space $S$ has dimension $r\geq 2$. We replace the low dimensional assumption on $S$ of Chaplygin's Theorem, by
the following hypothesis on the gyroscopic coefficients $C_{ij}^k$:

\begin{customhyp}{(H)}
\label{hyp}
The gyroscopic coefficients $C_{ij}^k$  satisfy
\begin{equation*}
C_{ij}^k= 0, \quad \mbox{for} \quad k\neq i\neq j\neq k, \qquad C_{ij}^j=C_{ik}^k \quad \mbox{for all } \quad j, k\neq i.
\end{equation*}
\end{customhyp}
It may seem that condition \ref{hyp}, as formulated here, depends on the choice of coordinates. Lemma~\ref{L:Intrinsic}  
below shows that 
this is not the case.

\begin{remark}
\label{2-dim}
It is immediate to check that \ref{hyp} is satisfied automatically if $r=2$. 
\end{remark}

\begin{remark} In the proofs  of Lemma~\ref{L:Intrinsic} and Theorem~\ref{th:Chaplygin-Gen} below, we will find  it  useful to work with the following, equivalent formulation of condition~\ref{hyp}:
\begin{equation}
\label{eq:H-alt}
C_{ij}^k= \chi_i \delta_j^k - \chi_j\delta_i^k,
\end{equation}
for certain locally defined functions $\chi_i$ on $S$, $i=1,\dots r$. 
\end{remark}

\begin{lemma}
\label{L:Intrinsic}
Let $(s^1,\dots, s^r)$ and $(y^1, \dots, y^r)$ be coordinates on a  neighbourhood of $S$, and  let $C_{ij}^k$ and 
$\tilde C_{\alpha \beta }^\gamma$
 be the corresponding gyroscopic coefficients.
 If  $C_{ij}^k$ satisfy {\em \ref{hyp}}, then the same is true about $\tilde C_{\alpha \beta }^\gamma$.
\end{lemma}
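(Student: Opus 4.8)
The plan is to work throughout with the equivalent formulation~\eqref{eq:H-alt} of condition~\ref{hyp}, and to exhibit the structure it encodes as a coordinate-free property of the gyroscopic tensor $\mathcal{T}$. Concretely, I would first record that~\eqref{eq:H-alt} asserts the existence of locally defined functions $\chi_i$ on $S$ with $C_{ij}^k=\chi_i\delta_j^k-\chi_j\delta_i^k$. The key observation is that the $\chi_i$ are the components of a genuine 1-form $\chi:=\sum_{i=1}^r\chi_i\,ds^i$: evaluating $\mathcal{T}$ on a pair of vector fields $X=\sum_i X^i\,\partial/\partial s^i$ and $Y=\sum_j Y^j\,\partial/\partial s^j$ gives
\begin{equation*}
\mathcal{T}(X,Y)=\chi(X)\,Y-\chi(Y)\,X,
\end{equation*}
which is manifestly independent of the chart used to write it down.

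From here the conclusion is immediate. Since $\chi$ is an intrinsically defined 1-form, its components in the $y$-coordinates are $\tilde\chi_\alpha=\sum_{i=1}^r\chi_i\,\partial s^i/\partial y^\alpha$, and evaluating the same intrinsic expression for $\mathcal{T}$ in the $y$-chart yields $\tilde C_{\alpha\beta}^\gamma=\tilde\chi_\alpha\delta_\beta^\gamma-\tilde\chi_\beta\delta_\alpha^\gamma$. This is precisely~\eqref{eq:H-alt} written in the new coordinates, so the $\tilde C_{\alpha\beta}^\gamma$ satisfy~\ref{hyp}. If one prefers to avoid the intrinsic language, the same identity follows by substituting $C_{ij}^k=\chi_i\delta_j^k-\chi_j\delta_i^k$ into the $(1,2)$-tensor transformation law $\tilde C_{\alpha\beta}^\gamma=\sum_{i,j,k}(\partial s^i/\partial y^\alpha)(\partial s^j/\partial y^\beta)(\partial y^\gamma/\partial s^k)\,C_{ij}^k$ and collapsing the two resulting sums with the chain-rule identities $\sum_j(\partial s^j/\partial y^\beta)(\partial y^\gamma/\partial s^j)=\delta_\beta^\gamma$ and $\sum_i(\partial s^i/\partial y^\alpha)(\partial y^\gamma/\partial s^i)=\delta_\alpha^\gamma$.

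The one genuinely non-trivial ingredient is the equivalence between~\ref{hyp} and~\eqref{eq:H-alt}, which underpins the whole argument. To establish it I would, given~\ref{hyp}, define $\chi_i:=C_{ij}^j$ for any $j\neq i$; this is well defined precisely because the second clause of~\ref{hyp} guarantees independence of the choice of $j$. One then checks $C_{ij}^k=\chi_i\delta_j^k-\chi_j\delta_i^k$ case by case according to how many of the indices $i,j,k$ coincide, the only slightly delicate case being $k=i\neq j$, where the skew-symmetry $C_{ij}^k=-C_{ji}^k$ is needed to match the right-hand side $-\chi_j$. I expect this bookkeeping to be the main obstacle, in the sense that it is the only place where the content of~\ref{hyp} actually enters; once~\eqref{eq:H-alt} is available, the coordinate-independence is essentially automatic from the tensorial nature of $\mathcal{T}$.
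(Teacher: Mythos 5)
Your proof is correct and follows essentially the same route as the paper: substitute the form $C_{ij}^k=\chi_i\delta_j^k-\chi_j\delta_i^k$ into the $(1,2)$-tensor transformation law and collapse the sums to obtain $\tilde C_{\alpha\beta}^\gamma=\tilde\chi_\alpha\delta_\beta^\gamma-\tilde\chi_\beta\delta_\alpha^\gamma$ with $\tilde\chi_\alpha=\sum_i\chi_i\,\partial s^i/\partial y^\alpha$. Your explicit case-by-case verification of the equivalence between \ref{hyp} and \eqref{eq:H-alt} (including the use of skew-symmetry in the case $k=i\neq j$) is a correct filling-in of a step the paper only asserts in a remark.
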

\begin{proof}
Due to the coordinate transformation rules for  the gyroscopic tensor $\mathcal{T}$ we have
\begin{equation*}
\label{eq:aux-lemma0}
\tilde C_{\alpha \beta}^\gamma =\sum_{i,j,k=1}^r C_{ij}^k
   \frac{\partial s^i}{\partial y^\alpha}  \frac{\partial s^j}{\partial y^\beta}
\frac{\partial y^\gamma}{\partial s^k},\quad \mbox{for all} \quad \alpha, \beta, \gamma =1, \dots, r.
\end{equation*}
Hence, \eqref{eq:H-alt} implies
\begin{equation*}
\begin{split}
\tilde C_{\alpha \beta}^\gamma &=\sum_{i,j=1}^r\left (\chi_i\frac{\partial s^i}{\partial y^\alpha}  \frac{\partial s^j}{\partial y^\beta}
\frac{\partial y^\gamma}{\partial s^j} \right )-\sum_{i,j=1}^r\left (\chi_j \frac{\partial s^i}{\partial y^\alpha}  \frac{\partial s^j}{\partial y^\beta}
\frac{\partial y^\gamma}{\partial s^i} \right ) \\
&=\left ( \sum_{i=1}^r \chi_i\frac{\partial s^i}{\partial y^\alpha} \right ) \delta^\gamma_\beta - \left ( \sum_{j=1}^r \chi_j\frac{\partial s^j}{\partial y^\beta}
\right ) \delta^\gamma_\alpha.
\end{split}
\end{equation*}
Therefore, the coefficients $\tilde C_{\alpha \beta}^\gamma$ also satisfy \eqref{eq:H-alt} with corresponding $\tilde \chi_\alpha
 = \sum_{i=1}^r \chi_i\frac{\partial s^i}{\partial y^\alpha} $.
\end{proof}

The  lemma shows that~\ref{hyp} contains  intrinsic geometric information about the interplay 
between the kinetic energy metric and the
constraint distribution (see Remark~\ref{rmk:intrinsic} below for more details).

Recall from Proposition~\ref{P:measure} that the existence of a basic invariant measure 
is a  necessary condition for Chaplygin Hamiltonisation. Moreover, its density with respect to the Liouville volume determines 
the corresponding time and momentum rescaling. Our main result, stated in the 
theorem below, shows that, in the presence of a  basic invariant measure, the Chaplygin Hamiltonisation of the system is guaranteed  by condition~\ref{hyp}.

\begin{theorem}[Generalisation of Chaplygin's Reducing Multiplier Theorem]
\label{th:Chaplygin-Gen}
Suppose that $r\geq 2$ and that the reduced equations  \eqref{eq:reduced-Hamilt} possess the invariant measure
\begin{equation*}
\mu =\exp(\sigma (s)) \, \nu,  \qquad \sigma\in C^\infty(S),
\end{equation*}
where  $\nu$ is the Liouville volume form on $T^*S$.
Suppose moreover that the gyroscopic coefficients satisfy {\em \ref{hyp}} everywhere on $S$.
Then, after the time and momentum reparametrisation 
\begin{equation}
\label{eq:time-mom-repar}
dt = \exp \left ( \frac{ \sigma (s)}{1-r} \right )\, d\tau, \qquad  p_i= \exp\left ( \frac{ \sigma (s)}{1-r} \right )\,\tilde p_i, \qquad i=1,\dots, r,
\end{equation}
the equations \eqref{eq:reduced-Hamilt} transform to Hamiltonian form
\begin{equation*}
\frac{ds^i}{d\tau}= \frac{\partial \tilde H}{\partial \tilde p_i}, \qquad \frac{d \tilde p_i}{d\tau}= -\frac{\partial \tilde H}{\partial  s_i}, \qquad i=1,\dots ,r,
\end{equation*}
where $\tilde H (s^i,\tilde p_i) =H \left (s^i, \exp \left ( \frac{ \sigma (s)}{1-r} \right )\,\tilde p_i \right )$.
\end{theorem}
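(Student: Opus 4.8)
The plan is to reduce the statement to a direct verification that the rescaled equations are Hamiltonian, organised around the single function $\phi := \sigma/(r-1)$, so that the reparametrisation~\eqref{eq:time-mom-repar} reads $dt = \exp(-\phi)\,d\tau$ and $p_i = \exp(-\phi)\,\tilde p_i$, in agreement with Proposition~\ref{P:measure}. First I would extract the arithmetic content of the two hypotheses. Writing the gyroscopic coefficients in the equivalent form~\eqref{eq:H-alt}, $C_{ij}^k = \chi_i\delta_j^k - \chi_j\delta_i^k$, a one-line computation gives $\sum_{j=1}^r C_{ij}^j = (r-1)\chi_i$. On the other hand, the invariance of $\mu$ forces the necessary condition $\partial\sigma/\partial s^i = -\sum_{j=1}^r C_{ij}^j$ from Theorem~\ref{T:measures-1-form}. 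Combining the two yields the key identity $\chi_i = -\partial\phi/\partial s^i$, which is the only place where the existence of the invariant measure enters.

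Next I would carry out the change of variables. Since $dt = \exp(-\phi)\,d\tau$ we have $d/dt = \exp(\phi)\,d/d\tau$, and the configuration equations transform painlessly: using $\partial H/\partial p_i = \sum_j K^{ij}p_j$ together with $p_j = \exp(-\phi)\tilde p_j$ one checks that $ds^i/d\tau = \exp(-2\phi)\sum_j K^{ij}\tilde p_j = \partial\tilde H/\partial\tilde p_i$, because $\tilde H = \tfrac12\exp(-2\phi)\sum_{a,b}K^{ab}\tilde p_a\tilde p_b + U$. The substance of the proof is the momentum equation. Differentiating $\tilde p_i = \exp(\phi)p_i$ with respect to $\tau$ and inserting the original equations~\eqref{eq:reduced-Hamilt} gives
\begin{equation*}
\frac{d\tilde p_i}{d\tau} = p_i\sum_{k=1}^r\frac{\partial\phi}{\partial s^k}\frac{\partial H}{\partial p_k} - \frac{\partial H}{\partial s^i} - \sum_{j,k=1}^r C_{ij}^k p_k\frac{\partial H}{\partial p_j},
\end{equation*}
where the first term comes from the $\tau$-derivative hitting the factor $\exp(\phi)$.

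Independently I would relate $\partial\tilde H/\partial s^i$ to $\partial H/\partial s^i$ by the chain rule applied to $\tilde H(s,\tilde p) = H(s,\exp(-\phi)\tilde p)$, obtaining $-\partial\tilde H/\partial s^i = -\partial H/\partial s^i - \chi_i\sum_j p_j\,\partial H/\partial p_j$ after using $\chi_i = -\partial\phi/\partial s^i$. The goal is then to match the two expressions, i.e.\ to show that the gyroscopic and rescaling terms combine correctly. Here is where hypothesis~\ref{hyp} does the essential work: substituting $C_{ij}^k = \chi_i\delta_j^k - \chi_j\delta_i^k$ into the sum $\sum_{j,k}C_{ij}^k p_k\,\partial H/\partial p_j$ splits it into $\chi_i\sum_j p_j\,\partial H/\partial p_j - p_i\sum_j\chi_j\,\partial H/\partial p_j$. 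Carrying the overall minus sign, the second piece becomes $+\,p_i\sum_j\chi_j\,\partial H/\partial p_j$ and exactly cancels the rescaling term $p_i\sum_k(\partial\phi/\partial s^k)(\partial H/\partial p_k) = -p_i\sum_k\chi_k\,\partial H/\partial p_k$, while the first piece reproduces precisely the contribution $-\chi_i\sum_j p_j\,\partial H/\partial p_j$ appearing in $-\partial\tilde H/\partial s^i$.

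I expect this pairwise cancellation to be the main obstacle, in the sense that it is the step that genuinely requires the special tensorial form~\eqref{eq:H-alt}: for general gyroscopic coefficients the leftover terms do not collapse, and a rescaling by a single function $\phi$ cannot produce a Hamiltonian system. Assembling these computations shows $d\tilde p_i/d\tau = -\partial\tilde H/\partial s^i$, which together with the already verified configuration equation completes the proof; the intrinsic independence of hypothesis~\ref{hyp} from the chosen coordinates, guaranteed by Lemma~\ref{L:Intrinsic}, ensures the conclusion is coordinate-free.
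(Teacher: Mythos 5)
Your proposal is correct and follows essentially the same route as the paper's proof: a chain-rule computation of the transformed equations combined with the substitution of \ref{hyp} in the form~\eqref{eq:H-alt} and the necessary measure condition~\eqref{eq:Inv-Measure}, which together yield $\partial\sigma/\partial s^i=-(r-1)\chi_i$ and force the cancellation of the gyroscopic and rescaling terms. The only difference is organisational — you phrase everything through $\phi=\sigma/(r-1)$ and match $d\tilde p_i/d\tau$ against $-\partial\tilde H/\partial s^i$ directly, whereas the paper collects the residual terms into a single bracket and shows it vanishes — but the mathematical content is identical.
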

Since \ref{hyp} holds automatically when $r=2$ (Remark~\ref{2-dim}), this is indeed a generalisation of Chaplygin's Theorem~\ref{th:Chaplygin}.

\begin{proof}
By the chain rule we have 
\begin{equation*}
\frac{\partial \tilde H}{\partial s^i} =  \frac{\partial H}{\partial s^i}+\frac{1}{1-r}\sum_{j=1}^r \frac{\partial H}{\partial p_j} \tilde p_j \exp \left (  \frac{ \sigma (s)}{1-r}
\right )
\frac{\partial \sigma }{\partial s^i}, \qquad 
\frac{\partial \tilde H}{\partial \tilde p_i}=\exp \left (  \frac{ \sigma (s)}{1-r}
\right )
 \frac{\partial  H}{\partial  p_i}, 
\end{equation*}
and hence
\begin{equation}
\label{eq:proof-gen-aux0}
\frac{\partial  H}{\partial s^i} =  \frac{\partial \tilde  H}{\partial s^i} - \frac{1}{1-r} \sum_{j=1}^r \frac{\partial \tilde H}{\partial \tilde p_j} \tilde p_j 
\frac{\partial \sigma }{\partial s^i}, \qquad 
\frac{\partial  H}{\partial  p_i}= \exp \left (  \frac{ -\sigma (s)}{1-r}\right )
 \frac{\partial  \tilde H}{\partial \tilde  p_i}, \qquad i=1,\dots, r.
\end{equation}
Therefore, in view of the first equation in   \eqref{eq:reduced-Hamilt} we obtain
\begin{equation}
\label{eq:proof-r=2-aux1}
\frac{ds^i}{d\tau}= \exp \left (  \frac{ \sigma (s)}{1-r}
\right )
 \frac{ds^i}{dt} =\frac{\partial \tilde H}{\partial \tilde p_i}, \qquad i=1,\dots, r.
\end{equation}

On the other hand we have
\begin{equation*}
\begin{split}
\frac{d\tilde p_i}{d\tau} & =\exp \left (  \frac{ \sigma (s)}{1-r}\right ) \frac{d \tilde p_i}{dt}=  \frac{d  p_i}{dt} -\frac{p_i}{1-r}\sum_{j=1}^r\frac{\partial \sigma }{\partial s_j}\frac{ds^j}{dt}. 
\end{split}
\end{equation*}
Using now both equations in \eqref{eq:reduced-Hamilt}, the above equation becomes
\begin{equation}
\frac{d\tilde p_i}{d\tau} 
= - \frac{\partial H}{\partial s^i}- \sum_{j,k=1}^r C_{ij}^kp_k \frac{\partial H}{\partial p_j}
-\frac{p_i}{1-r}\sum_{j=1}^r\frac{\partial \sigma }{\partial s_j}\frac{\partial H}{\partial p_j} ,
\end{equation}
which in view of \eqref{eq:proof-gen-aux0} gives
\begin{equation}
\label{eq:general-after-chain-rule}
\frac{d\tilde p_i}{d\tau}   =-\frac{\partial  \tilde H}{\partial s^i} + \sum_{j=1}^r \left  [ \frac{1}{ 1-r} \left ( \tilde p_j \frac{\partial \sigma }{\partial s^i}- \tilde p_i \frac{\partial \sigma }{\partial s^j} \right )
-\sum_{k=1}^rC_{ij}^k \tilde p_k \right ] \frac{\partial \tilde H}{\partial \tilde  p_j}, \qquad i=1,\dots ,r.
\end{equation}
Using that  \ref{hyp} holds, we use~\eqref{eq:H-alt} to simplify
\begin{equation}
\label{eq:proof-aux1}
\sum_{k=1}^rC_{ij}^k \tilde p_k =  \tilde p_j   \chi_i - \tilde  p_i \chi_j.
\end{equation}
On the other hand, the assumption that the measure  $\mu$ is preserved by the flow, implies, by  Theorem \ref{T:measures-1-form},
 that \eqref{eq:Inv-Measure} holds.
Combining these equations with \ref{hyp} formulated as~\eqref{eq:H-alt}, we have
\begin{equation}
\label{eq:proof-aux2}
\begin{split}
   \tilde p_j \frac{\partial \sigma }{\partial s^i}- \tilde p_i \frac{\partial \sigma }{\partial s^j} &=
-\tilde p_j \sum_{k=1}^rC_{ik}^k + \tilde p_i \sum_{k=1}^rC_{jk}^k =
-\tilde p_j \sum_{k=1, k\neq i }^r\chi_i  + \tilde p_i \sum_{k=1, k\neq j}^r\chi_j \\
&= (r-1)\left ( \tilde p_i \chi_j - \tilde p_j \chi_i \right ).
\end{split}
\end{equation}
Substitution of \eqref{eq:proof-aux1} and \eqref{eq:proof-aux2} into \eqref{eq:general-after-chain-rule}
 leads to $\frac{d\tilde p_i}{d\tau}   =-\frac{\partial  \tilde H}{\partial s^i}$, for
$ i=1,\dots ,r$, as required.
\end{proof}

\begin{remark}
\label{rmk:intrinsic}
The 
 notion of {\em $\phi$-simple Chaplygin systems} introduced in~\cite{LGN-Marrero}  gives a  
 coordinate free characterisation of the  systems that satisfy the hypothesis of Theorem~\ref{th:Chaplygin-Gen}.
 Such concept is inspired by the observation that \ref{hyp}  is equivalent to the the existence of a 1-form $\chi$ on 
$S$ such that\footnote{This observation
is made in~\cite{LGN-Marrero} and was also  indicated by one of the referees of the paper.}   
 $$\mathcal{T}(Y,Z)=\chi(Y)Z-\chi(Z)Y,$$ for any two vector fields $Y, Z$ on $S$.
\end{remark}
\begin{remark}
The reduced equations of motion~\eqref{eq:reduced-Hamilt} 
 may be formulated in almost Hamiltonian form with respect to a 2-form on $T^*S$ that
 is non-degenerate but fails to be closed~\cite{Stanchenko,EhlersKoiller}.
 An equivalent geometric formulation of Theorem~\ref{th:Chaplygin-Gen} states that such 2-form
 is closed after multiplication by the conformal factor $f(s)= \exp \left ( \frac{ \sigma (s)}{r-1} \right )$, see~\cite{LGN-Marrero}.
 \end{remark}
 
Our result on Chaplygin Hamiltonisation persists under the addition of an invariant potential. More precisely we have:
\begin{corollary}
\label{C:potential}
Suppose that a $G$-Chaplygin system satisfies the hypothesis of Theorem~\ref{th:Chaplygin-Gen}. Then, after the addition
of a $G$-invariant potential energy, the system continues to satisfy the hypothesis of Theorem~\ref{th:Chaplygin-Gen} so it  
allows a Chaplygin Hamiltonisation by the same reparametrisation of time and momenta.
\end{corollary}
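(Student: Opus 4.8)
The plan rests on a single structural observation: the gyroscopic coefficients $C_{ij}^k$, and hence everything that the hypothesis of Theorem~\ref{th:Chaplygin-Gen} constrains, are built entirely out of the kinetic energy metric and the constraint distribution and are completely insensitive to the potential. Indeed, the defining formula~\eqref{eq:Coeffs} involves only the horizontal lifts of the coordinate vector fields, their commutators, and the metric $\langle\cdot,\cdot\rangle$; the potential never enters. Consequently the 1-form $\Theta=-\sum_{i,j}C_{ij}^j\,ds^i$, and therefore any function $\sigma$ with $d\sigma=\Theta$, are likewise unchanged when a potential is switched on.

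With this in hand, I would argue as follows. Adding a $G$-invariant potential modifies the reduced dynamics only through the reduced Hamiltonian~\eqref{eq:Ham}, which acquires the extra summand $U(s)$ depending on $s$ alone. First I would note that hypothesis~\ref{hyp} is a statement about the $C_{ij}^k$ only, so by the observation above it continues to hold verbatim. Next I would check that the basic invariant measure survives: by Theorem~\ref{T:measures-1-form} the measure $\mu=\exp(\sigma)\,\nu$ is invariant for the modified system if and only if $\Theta=d\sigma$, exactly the same condition as before; since neither $\Theta$ nor the chosen $\sigma$ has changed, the measure remains invariant with the identical density.

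The only point that genuinely requires verification is that the presence of $U$ does not spoil the measure computation, and here the mechanism is transparent. In the invariance calculation carried out in the proof of Theorem~\ref{T:measures-1-form}, the potential contributes only through $\partial H/\partial p_i=\sum_j K^{ij}(s)p_j$, which is unaffected because $\partial U/\partial p_i=0$; the condition~\eqref{eq:Inv-Measure}, namely $\partial\sigma/\partial s^i=-\sum_j C_{ij}^j$, is thus independent of $U$. I would therefore conclude that both hypotheses of Theorem~\ref{th:Chaplygin-Gen} hold for the system with potential, using the same $\sigma$. Invoking that theorem then yields Chaplygin Hamiltonisation via the reparametrisation~\eqref{eq:time-mom-repar} determined by $\sigma$ --- the very same reparametrisation as in the potential-free case --- which is the assertion. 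I do not expect any real obstacle here: the content of the corollary is precisely the decoupling of the gyroscopic data from the potential, and the only work is to make that decoupling explicit.
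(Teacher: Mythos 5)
Your proposal is correct and follows essentially the same route as the paper's own (very brief) proof: the gyroscopic coefficients, and hence hypothesis \ref{hyp}, are independent of the potential, and the divergence computation behind Theorem~\ref{T:measures-1-form} shows that the potential drops out of the measure-invariance condition~\eqref{eq:Inv-Measure}, so the same $\sigma$ and the same reparametrisation~\eqref{eq:time-mom-repar} work. Your write-up simply makes explicit the two observations the paper states in one line each.
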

\begin{proof}
It is immediate to see from~\eqref{eq:reduced-Hamilt} that the system with the extra potential preserves the same volume form. 
On the other hand,
the system with the extra potential also satisfies condition~\ref{hyp} 
since the definition of the gyroscopic coefficients is independent of the potential.
\end{proof}

\section{Example: Hamiltonisation of a multi-dimensional symmetric rigid body  with a flat face that rubber rolls on the outer surface of a sphere}
\label{S:Example}

Following Borisov et al~\cite{BorMamRubber},  consider the motion of a rigid body with a flat face which is at every time  tangent to a sphere of
radius $R$ that is fixed in space, see Figure~\ref{F:fig1}. The body is subject to a {\em rolling} nonholonomic constraint 
that prevents slipping of its  flat face  over the surface of sphere, and to a {\em rubber}\footnote{This terminology was introduced by Ehlers, Koiller
and coauthors in~\cite{EhlersKoiller} and \cite{KoillerRubber} and is widely used in the literature.} nonholonomic constraint
that prohibits {\em spinning}, namely  it forbids  rotations of the body about the normal vector to the sphere at the contact point $P$. 
This problem without the rubber constraint
was first considered by Woronets~\cite{Vor1,Vor2}.

We shall give a multi-dimensional generalisation of the system and, assuming some symmetry of the body, 
 prove its Hamiltonisation by applying Theorem~\ref{th:Chaplygin-Gen}.

\begin{figure}[h]
\centering
\includegraphics[totalheight=5cm]{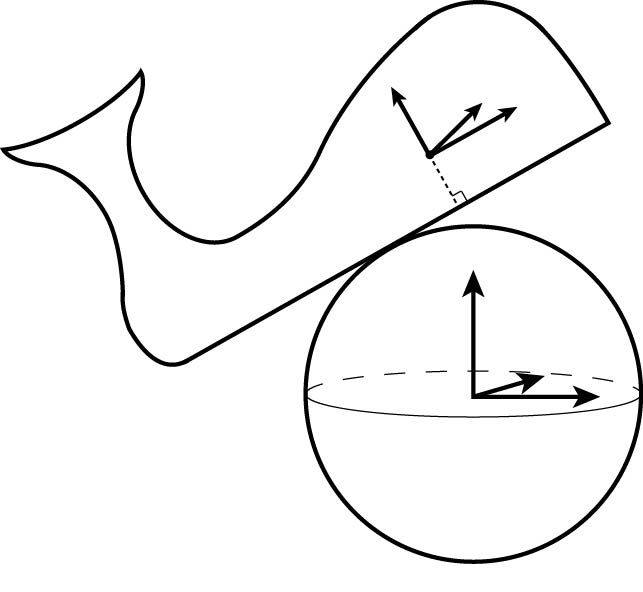}
 \put (-15,51) {$e_1$} \put (-30,56) {$e_2$} \put (-37,75) {$e_3$} \put (-50,45) {$O$}  \put (-62,100) {$C$}  \put (-47,98) {$a$} \put (-30,110) {$E_1$} \put (-45,120) {$E_2$} 
  \put (-60,122) {$E_3$}  \put (-64,85) {$P$} 
\caption{Body with a flat face that rubber rolls over a fixed sphere.}
\label{F:fig1}
\end{figure}

\subsection{The 3-dimensional case}
Consider a moving {\em body frame} $\{E_1, E_2, E_3\}$  that is rigidly attached to the body at its centre of mass  $C$ 
and is such that the $E_3$ axis is parallel to the outward normal vector to the
sphere at the contact point $P$ (see Figure~\ref{F:fig1}). Consider also a fixed {\em space frame}  $\{e_1, e_2, e_3\}$
which is attached to the centre $O$ of the sphere. As usual, the change of basis matrix between the two coordinate systems is an
element $g\in \SO(3)$ that determines the attitude of the body.

Let $x=(x_1,x_2,x_3)^T\in \R^3$ be the space coordinates of the vector $\overrightarrow{OC}$. Its corresponding body coordinates are
\begin{equation}
X=(X_1,X_2,X_3)^T=g^{-1}x.
\end{equation}
The condition that the flat face of the body is always tangent to the sphere gives the holonomic constraint
\begin{equation}
\label{eq:hol-const}
X_3=R+a,
\end{equation}
where $a$ is the distance between $C$ and the flat face, positively measured in the direction of $E_3$.\footnote{Note that $a=-z$ in the notation of~\cite{BorMamRubber}.}

The configuration of the system is completely determined by  $((X_1,X_2),g)\in \R^2\times \SO(3)$ so the configuration space
is $Q=\R^2\times \SO(3)$.
 In particular, the space coordinates
of the vector $\overrightarrow{OP}$ are the entries of the vector $RgE_3$. 

Let  $\Omega =(\Omega_1, \Omega_2, \Omega_3)^T
 \in \R^3$ be the angular velocity
of the body written in the body frame. Then
\begin{equation}
g^{-1}\dot g = \begin{pmatrix} 0 & -\Omega_3 & \Omega_2 \\ \Omega_3 & 0 & -\Omega_1 \\
-\Omega_2 & \Omega_1 & 0 \end{pmatrix} \in \so(3).
\end{equation}
The constraint that the body rolls without slipping over the sphere is 
\begin{equation}
\label{eq:Rolling-const-3D}
\dot X=  -R \Omega \times E_3,
\end{equation}
where $\times$ denotes the usual cross product in $\R^3$. On the other hand, the rubber constraint is 
\begin{equation}
\label{eq:Rubber-const-3D}
\Omega_3=0.
\end{equation}
Consider the motion of the body in the absence of potential forces, so the Lagrangian is given by the kinetic energy. Let $\| \cdot \|$ denote the euclidean norm in 
$\R^3$. Considering that $\| \dot x\|^2 =\|g^{-1}\dot x\|^2$ and
\begin{equation*}
g^{-1}\dot x = \dot X+ \Omega \times X,
\end{equation*}
we have
\begin{equation}
\label{eq:Lag-3D}
L:TQ\to \R, \qquad L(X,g,\dot X, \dot g)=\frac{1}{2}( \I \Omega, \Omega )  + \frac{m}{2} \left \| \dot X+ \Omega \times X \right \|^2,
\end{equation}
where $m$ is the total mass of the body, $\I$ is the inertia tensor and $(\cdot , \cdot )$ is the euclidean 
scalar product in $\R^3$. In the above equation it is understood that $X_3=R+a$ and $\dot X_3=0$.

There is a freedom in the choice  of orientation of the space frame which is represented by the action of $\SO(3)$ on $Q$ 
given by $h\cdot((X_1,X_2),g)=((X_1,X_2),hg)$. It is easily verified that this action satisfies the conditions (i)--(iii),
given in the definition of  a Chaplygin system in Section~\ref{S:Prelim}. Therefore, our problem is an $\SO(3)$-Chaplygin system 
with shape space $S=\R^2$.

It was shown by Borisov et al~\cite{BorMamRubber} that the system possesses a smooth invariant measure if and only if  the mass distribution of the body
is such that at least one of the following conditions is satisfied:
 \begin{itemize}
\item[C1:] The inertia tensor $\I=\mbox{diag}(I_1,I_2,I_3)$ and $a=0$.
\item[C2:] The inertia tensor $\I=\mbox{diag}(I_1,I_1,I_3)$.
\end{itemize}
Since the shape space $S=\R^2$ is two-dimensional, the Hamiltonisation of the system if either C1 or C2 holds follows from Chaplygin's
Reducing Multiplier Theorem~\ref{th:Chaplygin}.

\begin{remark}
\label{Rmk:C1C2}
Note that the inertia tensor in C1 is non-generic since  an assumption on the orientation of the body frame has already been made.
The condition C2 says that the body is axially symmetric about an axis perpendicular to the flat face.
\end{remark}

\subsection{The $n$-dimensional case}

The multi-dimensional generalisation of the system treated in the previous section consists of an $n$-dimensional rigid body
with a flat $n-1$-dimensional face that rolls without slipping or spinning about a fixed $n-1$-dimensional sphere of 
radius $R$ centred at $O$ in $\R^n$. We follow the notation of the previous section. Most of the formulas  admit a straightforward  generalisation.

As before, assume that the  body frame $\{E_1, \dots , E_n\}$ is attached to the centre of mass $C$ of the body and 
$E_n$ is parallel to the outward normal vector to the sphere at the contact point. The space frame $\{e_1, \dots , e_n\}$ has its origin at $O$.

 The $n^{th}$ entry of the 
vector $X\in \R^n$, that gives body coordinates of the vector $\overrightarrow{OC}$,  satisfies the  holonomic constraint
\begin{equation}
\label{eq:hol-const-nD}
X_n=R+a,
\end{equation}
that generalises \eqref{eq:hol-const}. The configuration space of the system is $$Q= \R^{n-1} \times \SO(n) \ni ((X_1, \dots, X_{n-1}),g),$$
 where
the attitude matrix $g$ is the change of basis matrix between the body and the space frame.
The angular velocity in the body frame is the skew symmetric matrix
\begin{equation*}
\Omega=g^{-1}\dot g\in \so(n),
\end{equation*}
with entries $\Omega_{\mu \nu}$, $1\leq \mu, \nu \leq n$.
The rolling constraints \eqref{eq:Rolling-const-3D} generalise to 
\begin{equation}
\label{eq:Rolling-const-nD}
\dot X=  -R \Omega E_n,
\end{equation}
which in particular imply $\dot X_n=0$ in consistency with \eqref{eq:hol-const-nD}.
On the other hand, the natural generalisation of the rubber constraints \eqref{eq:Rubber-const-3D} that prohibit spinning is
\begin{equation}
\label{eq:Rubber-const-nD}
\Omega_{\mu \nu}=0, \qquad 1\leq \mu, \nu \leq n-1.
\end{equation}

Now recall that for an $n$-dimensional rigid body the inertia tensor $\I$ of the body is an operator
\begin{equation*}
\I:\so(n) \to \so (n), \qquad \I(\Omega)=\J\Omega + \Omega\J,
\end{equation*}
where $\J$ is the so-called mass tensor of the body, which is a symmetric and positive definite $n\times n$ matrix (see e.g. \cite{Ratiu80}).
 The Lagrangian is
\begin{equation}
\label{eq:Lag-nD}
L:TQ\to \R, \qquad L(X,g,\dot X, \dot g)=\frac{1}{2} ( \I \Omega, \Omega  )_\kappa  + \frac{m}{2} \left \| \dot X+ \Omega  X \right \|^2,
\end{equation}
where $\| \cdot \|$ is the euclidean norm in $\R^n$, $X=(X_1, \dots, X_{n-1}, R+a)$, $\dot X_n=0$, and $( \cdot, \cdot)_\kappa$ is the Killing metric in $\so(n)$:
\begin{equation*}
(\xi, \eta)_\kappa =-\frac{1}{2}\tr(\xi \eta).
\end{equation*}

In analogy with the 3-dimensional case, it is easy to establish that the multi-dimensional 
 problem is an $\SO(n)$-Chaplygin system with $r=n-1$-dimensional shape space $S=\R^{n-1}$.
The following theorem generalises the situation that was found in the $3$-dimensional case.

\begin{theorem}
\label{th:example}
Let $n\geq 3$. The reduced system on $T^*\R^{n-1}$ possesses an invariant measure and is Hamiltonisable if any of the following 
two conditions hold
 \begin{itemize}
\item[C1:] The mass tensor $\J=\mbox{\em diag}(J_1,J_2,\dots ,J_n)$ and $a=0$.
\item[C2:] The mass tensor $\J=\mbox{\em diag}(J_1,\dots ,J_1,J_n)$.
\end{itemize}
\end{theorem}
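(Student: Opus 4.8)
The plan is to verify the two hypotheses of Theorem~\ref{th:Chaplygin-Gen} --- the existence of a basic invariant measure and condition~\ref{hyp} --- by an explicit computation of the gyroscopic coefficients on the shape space $S=\R^{n-1}$, using $(X_1,\dots,X_{n-1})$ as global coordinates. The first step is to identify the horizontal lifts. Solving the rolling constraints~\eqref{eq:Rolling-const-nD} together with the rubber constraints~\eqref{eq:Rubber-const-nD} shows that an admissible velocity is parametrised by the entries $\Omega_{\mu n}$ ($1\le\mu\le n-1$) of the angular velocity, with $\dot X_\mu=-R\,\Omega_{\mu n}$. Since these relations have constant coefficients, the horizontal lift of $\partial/\partial X^i$ is the vector field $Y_i=\partial/\partial X^i+\overrightarrow{\Omega^{(i)}}$ on $Q$, where $\overrightarrow{\Omega^{(i)}}$ is the left-invariant vector field on $\SO(n)$ determined by the fixed matrix $\Omega^{(i)}=-R^{-1}(e_ie_n^T-e_ne_i^T)\in\so(n)$, and $e_1,\dots,e_n$ is the standard basis of $\R^n$.

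Next I would compute the two ingredients of~\eqref{eq:Coeffs}. Substituting the constraints into~\eqref{eq:Lag-nD} and using $\I(\Omega)=\J\Omega+\Omega\J$ with the Killing metric gives the reduced metric $K_{ij}=R^{-2}\bigl(J_{ij}+(J_{nn}+ma^2)\delta_{ij}+mX_iX_j\bigr)$, where $J_{ij}$ are the entries of the mass tensor $\J$. For the brackets, since the $Y_i$ differ from the commuting fields $\partial/\partial X^i$ by left-invariant fields, $[Y_i,Y_j]=\overrightarrow{[\Omega^{(i)},\Omega^{(j)}]}$ has vanishing base component and is thus vertical; the matrix commutator $[\Omega^{(i)},\Omega^{(j)}]=-R^{-2}(e_ie_j^T-e_je_i^T)$ lies precisely in the block of rotations forbidden by the rubber constraint, which is the geometric source of non-integrability. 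Pairing this vertical vector with the horizontal $Y_l$ through the kinetic-energy metric yields the compact expression
\begin{equation*}
\langle [Y_i,Y_j],Y_l\rangle=\frac{1}{R^3}\left(\psi_j\,\delta_{il}-\psi_i\,\delta_{jl}\right),\qquad \psi_i:=J_{in}+ma\,X_i,
\end{equation*}
so that $C_{ij}^k=R^{-3}\sum_l K^{kl}(\psi_j\delta_{il}-\psi_i\delta_{jl})=R^{-3}(\psi_jK^{ki}-\psi_iK^{kj})$.

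I would then specialise to the two cases; both have diagonal $\J$, so $J_{in}=0$ and $\psi_i=ma\,X_i$. Under C1 one has $a=0$, hence $\psi_i\equiv0$, the gyroscopic tensor vanishes identically, condition~\ref{hyp} holds trivially, and the reduced equations are already Hamiltonian with the Liouville measure invariant ($\sigma=0$). Under C2 the isotropy $J_1=\dots=J_{n-1}$ makes $K_{ij}=R^{-2}(b\,\delta_{ij}+mX_iX_j)$ with $b=J_1+J_n+ma^2>0$ a multiple of a rank-one update of the identity, and I would invert it by the Sherman--Morrison formula, giving $K^{ki}=\tfrac{R^2}{b}\bigl(\delta^{ki}-\tfrac{m}{b+m\|X\|^2}X_kX_i\bigr)$. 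Substituting this into $C_{ij}^k$ shows that the rank-one correction contributes a term proportional to $X_iX_jX_k$, which is symmetric in $i,j$ and therefore cancels in the skew combination $\psi_jK^{ki}-\psi_iK^{kj}$. What survives is $C_{ij}^k=\chi_i\delta_j^k-\chi_j\delta_i^k$ with $\chi_i=-\tfrac{ma}{Rb}X_i$, which is exactly the form~\eqref{eq:H-alt} of~\ref{hyp}. Since $\chi=\sum_i\chi_i\,dX^i=d\bigl(-\tfrac{ma}{2Rb}\|X\|^2\bigr)$ is exact and $\Theta=-\sum_{i,j}C_{ij}^j\,dX^i=-(r-1)\chi$ by~\ref{hyp}, Theorem~\ref{T:measures-1-form} produces the invariant measure with $\sigma=\tfrac{(n-2)ma}{2Rb}\|X\|^2$, and Theorem~\ref{th:Chaplygin-Gen} then delivers the Hamiltonisation.

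The main obstacle is the cancellation in case C2: a priori $C_{ij}^k$ couples the vector $\psi$ to the full inverse metric $K^{kl}$, and for a generic mass tensor this expression would not satisfy~\ref{hyp}. The mechanism that rescues the argument is that the axial symmetry of C2 forces $\psi_i$ to be proportional to $X_i$, precisely the vector that generates the rank-one part of $K$; consequently the only potentially obstructing term is totally symmetric in $X_iX_jX_k$ and drops out of the antisymmetrisation. Making this alignment explicit --- and observing that it is exactly what fails for a non-axially-symmetric body with $a\neq0$ --- is the crux of the proof; the remaining bracket and trace computations, although lengthy, are routine.
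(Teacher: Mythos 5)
Your proposal is correct and follows essentially the same route as the paper: compute the horizontal lifts, the reduced metric $K_{kl}$, the commutators $[Y_i,Y_j]$ and the triple products, deduce the gyroscopic coefficients, and then check condition \ref{hyp} and the exactness of $\Theta$ so that Theorem~\ref{th:Chaplygin-Gen} applies; your formulas (including $\sigma$ and $\chi_i$) agree with those in the paper's Lemma~\ref{L:example}. The only cosmetic difference is that you invert $K$ explicitly via Sherman--Morrison and observe the cancellation of the $X_iX_jX_k$ term, while the paper verifies the ansatz for $C_{ij}^k$ by multiplying it back by $K_{kl}$, which amounts to the same computation.
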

Note that a multi-dimensional generalisation of Remark~\ref{Rmk:C1C2} also applies. 

\begin{proof}
The proof is an application of Theorem~\ref{th:Chaplygin-Gen}. The main task is to compute the gyroscopic coefficients $C_{ij}^k$ in  the  coordinates\footnote{Throughout this section we use sub-indices instead of super-indices on the coordinates.}
\begin{equation*}
s_i = X_i, \qquad i=1, \dots, n-1,
\end{equation*}
that provide a global chart for $S=\R^{n-1}$. The proof of the following lemma is given at the end of the section.
\begin{lemma}
\label{L:example}
The gyroscopic coefficients $C_{ij}^k$ written in the coordinates $(s_1, \dots, s_{n-1})$ are given as follows in the cases
C1 and C2 described in Theorem~\ref{th:example}:
 \begin{itemize}
\item[C1:] $C_{ij}^k=0$ for all $i,j,k\in \{1, \dots, n-1\}$.
\item[C2:] 
\begin{equation}
\label{eq:C2-example}
C_{ij}^k=
\frac{-ma}{R(J_1+J_n+ma^2)} (s_i  \delta_j^k -s_j\delta_i^k ), \qquad 1\leq i,j,k \leq n-1.
\end{equation}
\end{itemize}
\end{lemma}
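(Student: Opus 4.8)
The goal is to compute the gyroscopic coefficients $C_{ij}^k$ defined in~\eqref{eq:Coeffs} for the $\SO(n)$-Chaplygin system with shape space $S=\R^{n-1}$ coordinatised by $s_i=X_i$, $i=1,\dots,n-1$, and to exhibit them explicitly in the two cases C1 and C2. The plan is to carry out the standard reduction computation underlying~\eqref{eq:Coeffs}: first describe the constraint distribution $\D$ as the horizontal space of the principal connection, then compute the horizontal lifts of the coordinate vector fields $\partial/\partial s_i$, evaluate the commutators of these horizontal lifts, project onto the vertical/horizontal splitting~\eqref{eq:chap-condition}, and finally contract with the reduced kinetic-energy metric $K_{ij}$ and raise an index with $K^{ij}$.

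The first concrete step is to identify the horizontal lift explicitly. A point of $Q=\R^{n-1}\times\SO(n)$ is $((X_1,\dots,X_{n-1}),g)$ with $X_n=R+a$ fixed. The constraints~\eqref{eq:Rolling-const-nD} and~\eqref{eq:Rubber-const-nD} must be solved to express $\Omega=g^{-1}\dot g\in\so(n)$ and $\dot X$ in terms of $\dot s=(\dot X_1,\dots,\dot X_{n-1})$. The rubber constraint kills the upper-left $(n-1)\times(n-1)$ block of $\Omega$, leaving only the components $\Omega_{\mu n}$, $1\le\mu\le n-1$; the rolling constraint~\eqref{eq:Rolling-const-nD} then determines these from $\dot X$. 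I expect the horizontal lift $\mbox{hor}_q(\partial/\partial s_i)$ to have its $\so(n)$-part given by a matrix whose only nonzero entries are in the last row and column, proportional to $1/R$, with $\dot X_i$-coefficient $\delta$-like. Once these lifts are in hand, the commutator $[\mbox{hor}_q(\partial/\partial s_i),\mbox{hor}_q(\partial/\partial s_j)]$ is computed as a commutator of vector fields on $Q$; because the $\R^{n-1}$-part of each lift is a constant coordinate field, the nontrivial contribution comes from the left-invariant vector fields on $\SO(n)$ built from the $\Omega_{\mu n}$, whose bracket reintroduces an upper-left block component via the Lie bracket in $\so(n)$. This is the geometric source of the gyroscopic terms.

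The next step is to evaluate the reduced metric $K_{ij}$ from~\eqref{eq:Lag-nD} by restricting $L$ to horizontal vectors, and to substitute the specific mass tensors. Here the two cases diverge. In case C1, with $\J=\diag(J_1,\dots,J_n)$ and $a=0$, the contact point coincides with the centre of mass projection in a way that forces the relevant inner products $\langle[\,\cdot,\cdot\,],\mbox{hor}(\partial/\partial s_l)\rangle$ to vanish; the factor $a$ appearing in the bracket computation (traceable to $X_n=R+a$ and the $\Omega X$ term in the kinetic energy) makes every $C_{ij}^k$ proportional to $a$, so $a=0$ kills them all. In case C2, the axial symmetry $\J=\diag(J_1,\dots,J_1,J_n)$ makes the reduced metric isotropic enough (a scalar multiple of the Euclidean metric on $\R^{n-1}$, up to the mass contributions) that, after raising the index with $K^{ij}$, the coefficients collapse to the stated form $C_{ij}^k=\lambda(s_i\delta_j^k-s_j\delta_i^k)$ with $\lambda=-ma/\bigl(R(J_1+J_n+ma^2)\bigr)$. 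Recognising that the bracket naturally produces the antisymmetric combination $s_i\delta_j^k-s_j\delta_i^k$ is the key structural observation, since it matches exactly the form~\eqref{eq:H-alt} required by hypothesis~\ref{hyp} with $\chi_i=\lambda s_i$.

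The main obstacle I anticipate is the bookkeeping in the bracket and projection computation: one must carefully track how the last-row/column structure of the horizontal $\so(n)$-components interacts under the Lie bracket, identify which resulting components are vertical (hence projected out) versus horizontal, and then contract against the Killing-form-based inertia operator $\I(\Omega)=\J\Omega+\Omega\J$. The appearance of the combined inverse-inertia scalar $J_1+J_n+ma^2$ in the denominator signals that the metric inversion, though diagonal by symmetry, requires assembling the rotational contribution $(\I\Omega,\Omega)_\kappa$ and the translational contribution $\tfrac{m}{2}\|\dot X+\Omega X\|^2$ correctly and simplifying. The computation is routine but delicate, which is presumably why the authors defer the explicit verification to a separate proof of Lemma~\ref{L:example} at the end of the section. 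Once Lemma~\ref{L:example} is established, the theorem follows immediately: in both cases the coefficients satisfy~\ref{hyp} (trivially in C1, and via~\eqref{eq:H-alt} with $\chi_i=\lambda s_i$ in C2), while the existence of the invariant measure is guaranteed because $\Theta=-\sum_{i,j}C_{ij}^j\,ds^i$ is exact (it vanishes in C1 and equals an exact multiple of $\sum_i s_i\,ds_i$ in C2), so Theorem~\ref{th:Chaplygin-Gen} applies and yields the asserted Hamiltonisation.
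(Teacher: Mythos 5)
Your proposal follows essentially the same route as the paper: left-trivialise $TQ$, identify the horizontal lifts $\bigl(E_i,-\tfrac{1}{R}E_i\wedge E_n\bigr)$, bracket them to produce $-\tfrac{1}{R^2}E_i\wedge E_j$, and contract with the kinetic-energy metric; your structural observations (the vanishing of the rotational term for diagonal $\J$, the proportionality to $a$ in C1, and the antisymmetric combination $s_i\delta_j^k-s_j\delta_i^k$ in C2) all match the paper's computation. The one point you gloss over is that the reduced metric is not a multiple of the identity --- it carries an $m s_k s_l$ term --- so the collapse to \eqref{eq:C2-example} is not automatic from ``isotropy''; it works because that term is annihilated when contracted against the antisymmetric combination, which is why the paper sidesteps inverting $K$ and instead verifies directly that $\sum_k K_{kl}C_{ij}^k$ reproduces the triple product $\tfrac{ma}{R^3}(s_j\delta_{il}-s_i\delta_{jl})$.
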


It follows from the lemma that if C1 holds, then  the reduced system \eqref{eq:reduced-Hamilt} is already Hamiltonian without the need of a time reparametrisation, and the
symplectic volume form on $T^*\R^{n-1}$ is preserved. A similar phenomenon is encountered in the example of a homogeneous vertical disk that rolls on the
plane~(see e.g. \cite{BlochBook,LGN-Marrero}).

On the other hand, if C2 holds, then \eqref{eq:C2-example} implies that
 \ref{hyp} is verified and~\eqref{eq:H-alt} holds with $\chi_i=\frac{-ma}{R(J_1+J_n+ma^2)} s_i $.
Moreover, the conditions~\eqref{eq:Inv-Measure} for the preservation of the measure $\mu=\exp(\sigma(s))\,\nu$   are satisfied with
\begin{equation*}
\sigma =\frac{(n-2)ma}{2R(J_1+J_n+ma^2)}  \sum_{i=1}^{n-1} s_i^2.
\end{equation*}
Therefore, Theorem~\ref{th:Chaplygin-Gen} applies and the system is also Hamiltonisable in this case.
\end{proof}

The details about the Hamiltonisation stated in Theorem~\ref{th:example} are given in the following corollary that is a direct consequence of the
proof given above.

\begin{corollary}
If the condition  C1 in Theorem~\ref{th:example} holds, then the reduced equations of motion on $T^*\R^{n-1}$ are Hamiltonian
in the natural time variable and preserve the Liouville measure $\nu$ in $T^*\R^{n-1}$.

If the condition  C2 in Theorem~\ref{th:example} holds, then the reduced equations of motion on $T^*\R^{n-1}$ preserve the measure
\begin{equation*}
\mu= \exp \left ( \frac{(n-2)ma}{2R(J_1+J_n+ma^2)}  \sum_{i=1}^{n-1} X_i^2 \right ) \, \nu,
\end{equation*}
and become Hamiltonian after the time and momentum reparametrisation
\begin{equation*}
\begin{split}
dt &= \exp \left ( \frac{-ma}{2R(J_1+J_n+ma^2)}  \sum_{i=1}^{n-1} X_i^2 \right ) \, d\tau, \\
p_i & =\exp \left ( \frac{-ma}{2R(J_1+J_n+ma^2)}  \sum_{i=1}^{n-1} X_i^2 \right ) \, \tilde p_i, \qquad i=1, \dots, n-1.
\end{split}
\end{equation*}
\end{corollary}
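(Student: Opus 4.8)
The plan is to reduce the whole statement to the single computation underlying Lemma~\ref{L:example}, namely the evaluation of the gyroscopic coefficients \eqref{eq:Coeffs} in the global chart $s_i = X_i$ on $S = \R^{n-1}$; once these are known, the Corollary is read off mechanically from Theorem~\ref{th:Chaplygin-Gen} and Theorem~\ref{T:measures-1-form}. I expect that under C1 all $C_{ij}^k$ vanish, so that \eqref{eq:reduced-Hamilt} is already canonically Hamiltonian in the time $t$ and (by Theorem~\ref{T:measures-1-form} with $\sigma\equiv0$) preserves the Liouville measure $\nu$; and that under C2 the coefficients are given by \eqref{eq:C2-example}, from which \eqref{eq:H-alt} holds with $\chi_i = \tfrac{-ma}{R\beta}s_i$ (I abbreviate $\beta := J_1+J_n+ma^2$), the density $\sigma$ follows by integrating \eqref{eq:Inv-Measure}, and the stated rescaling is \eqref{eq:time-mom-repar} specialised to $r = n-1$.

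First I would identify the horizontal lifts. The rubber constraint \eqref{eq:Rubber-const-nD} confines the admissible angular velocities to the span of $A_\mu := E_{\mu n}-E_{n\mu}$, $\mu=1,\dots,n-1$, where $E_{\alpha\beta}$ denotes the elementary matrix, and the rolling constraint \eqref{eq:Rolling-const-nD} then fixes $\dot X = -R\,\Omega E_n$. Hence the horizontal lift $Y_i := \mathrm{hor}_q(\partial/\partial s_i)$ is the vector field on $Q = \R^{n-1}\times\SO(n)$ with $\dot X_\mu = \delta_{\mu i}$ and $g^{-1}\dot g = -\tfrac1R A_i$. Because the translational and rotational parts are decoupled across the product and carry constant coefficients, the bracket collapses to the commutator in $\so(n)$: $[Y_i,Y_j]$ is vertical and corresponds to $\dot X = 0$, $g^{-1}\dot g = \tfrac1{R^2}[A_i,A_j] = -\tfrac1{R^2}(E_{ij}-E_{ji})$. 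This is the one place where the non-integrability of $\D$ enters.

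Next I would evaluate the two pairings in \eqref{eq:Coeffs} using the kinetic metric read off from \eqref{eq:Lag-nD}, $\langle v,v'\rangle = (\I\Omega,\Omega')_\kappa + m\,(\dot X+\Omega X,\ \dot X'+\Omega' X)$ with $X = (X_1,\dots,X_{n-1},R+a)$. The conceptual key is the Cartan decomposition $\so(n) = \so(n-1)\oplus\mathfrak m$ of the symmetric pair of the sphere $\SO(n)/\SO(n-1)$, with $\mathfrak m = \mathrm{span}\{A_\mu\}$: the horizontal rotations lie in $\mathfrak m$, whereas the curvature $[A_i,A_j]$ lies in $\so(n-1)$. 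For any diagonal $\J$ the operator $\I$ preserves this splitting and its two summands are $(\cdot,\cdot)_\kappa$-orthogonal, so the entire rotational (Killing) contribution to $\langle[Y_i,Y_j],Y_l\rangle$ drops out and only the translational term survives; I expect $\langle[Y_i,Y_j],Y_l\rangle = \tfrac{ma}{R^3}(s_j\delta_{il}-s_i\delta_{jl})$, with an overall factor $a$. For C2 the reduced metric is the scalar-plus-rank-one matrix $K_{ij} = \tfrac1{R^2}(\beta\,\delta_{ij}+m\,s_is_j)$, which I would invert by Sherman--Morrison.

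Assembling $C_{ij}^k = \sum_l K^{kl}\langle[Y_i,Y_j],Y_l\rangle$ is the decisive step, and the bookkeeping here is what I expect to be the main obstacle: the rank-one correction of $K^{kl}$ contracts against the antisymmetric factor $s_j\delta_{il}-s_i\delta_{jl}$ and cancels identically, leaving precisely \eqref{eq:C2-example}. Setting $a = 0$ (case C1) annihilates the surviving term directly, so $C_{ij}^k\equiv 0$ there even for a general diagonal $\J$, which gives the first half of the Corollary. For C2, \eqref{eq:H-alt} holds with $\chi_i = \tfrac{-ma}{R\beta}s_i$, whence $\sum_k C_{ik}^k = (n-2)\chi_i$ and \eqref{eq:Inv-Measure} integrates to $\sigma = \tfrac{(n-2)ma}{2R\beta}\sum_i s_i^2$; since $1-r = 2-n$, the reparametrisation \eqref{eq:time-mom-repar} reads $dt = \exp(\tfrac{\sigma}{1-r})\,d\tau = \exp\big(-\tfrac{ma}{2R\beta}\sum_i X_i^2\big)\,d\tau$ (with the same factor on the momenta) and $\mu = \exp(\sigma)\,\nu$, matching the Corollary exactly.
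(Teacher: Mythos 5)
Your proposal is correct and follows essentially the same route as the paper: compute the horizontal lifts and their commutators, evaluate the reduced metric and the triple products $\langle[Y_i,Y_j],Y_l\rangle=\tfrac{ma}{R^3}(s_j\delta_{il}-s_i\delta_{jl})$, deduce the gyroscopic coefficients, and then invoke Theorem~\ref{T:measures-1-form} and Theorem~\ref{th:Chaplygin-Gen} with $r=n-1$. The only (harmless) differences are cosmetic: you explain the vanishing of the inertia term via the symmetric-pair splitting $\so(n)=\so(n-1)\oplus\mathfrak m$ where the paper just computes, and you invert $K$ by Sherman--Morrison where the paper verifies the identity $\sum_k K_{kl}C_{ij}^k=\langle[Y_i,Y_j],Y_l\rangle$ directly.
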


We finally present:
\begin{proof}[Proof of Lemma~\ref{L:example}]
Throughout the proof, for $u,v\in \R^n$ we denote
\begin{equation*}
u\wedge v= uv^T-vu^T\in \so(n).
\end{equation*}
The constraints \eqref{eq:Rolling-const-nD} and \eqref{eq:Rubber-const-nD}  imply 
\begin{equation}
\label{eq:aux0}
\Omega=-\frac{1}{R}\dot X \wedge E_n.
\end{equation}

In what follows, we  abbreviate $s=(s_1, \dots,s_{n-1})\in \R^{n-1}$ and $\tilde s=(s,R+a)\in \R^n$.

 For $q=(s,g)\in Q$, identify $T_qQ=\R^n \times \so(n)$ using the  the left trivialisation of $T_g\SO(n)$ and the standard
identifications and imbedding $T_s\R^{n-1}=\R^{n-1}= \R^{n-1}\times \{0\} \hookrightarrow \R^n$.

Equation \eqref{eq:aux0} implies that:
\begin{equation*}
 \mbox{  hor}_q \left ( \frac{\partial}{\partial s_i}  \right ) = \left ( E_i, -\frac{1}{R} E_i\wedge E_n \right ), \qquad
 i=1,\dots, n-1.
\end{equation*}

 As before, denote by $\langle \cdot,
\cdot \rangle$ the Riemannian metric on $Q$ defined by the  kinetic energy Lagrangian \eqref{eq:Lag-nD}. We have
\begin{equation*}
\begin{split}
K_{kl}(s)&=\left \langle  \mbox{  hor}_q \left ( \frac{\partial}{\partial s_k}  \right ),  \mbox{  hor}_q \left ( \frac{\partial}{\partial s_l}  \right )
\right \rangle_q \\
&= \frac{1}{R^2}( \I(E_k\wedge E_n) , E_l\wedge E_n )_\kappa +m\left (E_k  -\frac{1}{R} (E_k\wedge E_n) \tilde s, E_l -\frac{1}{R} (E_l\wedge E_n) \tilde s\right )_{\R^n},
\end{split}
\end{equation*}
where $(\cdot, \cdot)_{\R^n}$ denotes the euclidean inner product in $\R^n$.
Performing the calculations the above expression simplifies to 
\begin{equation}
\label{eq:Metric-Example}
K_{kl}(s)=\frac{1}{R^2} \left ( \J_{kl} +ms_ks_l +(\J_{nn}+ma^2)\delta_{kl} \right ), \qquad 1\leq k,l\leq n-1.
\end{equation}

On the other hand the commutator
\begin{equation*}
\left [  \mbox{  hor}_q \left ( \frac{\partial}{\partial s_i}  \right ),  \mbox{  hor}_q \left ( \frac{\partial}{\partial s_j}  \right ) \right ]
=\frac{1}{R^2} \left (0, [ E_i\wedge E_n, E_j\wedge E_n]_{\so(n)}  \right )=- \frac{1}{R^2} \left (0,  E_i\wedge  E_j  \right ),
\quad 1\leq i,j \leq n-1, 
\end{equation*}
where $[\cdot , \cdot ]_{\so(n)}$ is the Lie algebra commutator in $\so(n)$. Whence,
\begin{equation*}
\begin{split}
& \left \langle
 \left [  \mbox{  hor}_q \left ( \frac{\partial}{\partial s_i}  \right ) ,  \mbox{  hor}_q \left ( \frac{\partial}{\partial s_j}  \right ) \right ] \,
, \,  \mbox{  hor}_q \left ( \frac{\partial}{\partial s_l}  \right ) 
 \right \rangle_q
  \\
&\qquad \qquad \qquad = \frac{1}{R^3}( \I(E_i\wedge E_j) , E_l \wedge E_n )_\kappa +m\left (  -\frac{1}{R^2} (E_i\wedge E_j) \tilde s, E_l -\frac{1}{R} (E_l\wedge E_n) \tilde s\right )_{\R^n}.
\end{split}
\end{equation*}
Under the assumption that $\J$ is diagonal, the above expression simplifies to
\begin{equation}
\label{eq:triple-product-example}
\left \langle \left [  \mbox{  hor}_q \left ( \frac{\partial}{\partial s_i}  \right ) ,  \mbox{  hor}_q \left ( \frac{\partial}{\partial s_j}  \right ) \right ] \,
, \,  \mbox{  hor}_q \left ( \frac{\partial}{\partial s_l}  \right ) 
 \right \rangle_q
= \frac{ma}{R^3} (s_j \delta_{il} -s_i \delta_{jl}), \qquad 1\leq i,j,l \leq n-1.
\end{equation}

If C1 holds then $a=0$ and 
all of the gyroscopic coefficients  $C_{ij}^k$ vanish in view of \eqref{eq:Coeffs}.

It remains to consider the case C2. Assume thus that $\J=(J_1,\dots, J_1, J_n)$. 
Then \eqref{eq:Metric-Example} simplifies to 
\begin{equation*}
K_{kl}(s)=\frac{1}{R^2} \left ( (J_1+J_n+ma^2)\delta_{kl} +ms_ks_l  \right ), \qquad 1\leq k,l\leq n-1,
\end{equation*}
and hence, the coefficients $C_{ij}^k(s)$ defined by \eqref{eq:C2-example} satisfy
\begin{equation*}
\begin{split}
\sum_{k=1}^{n-1} K_{kl}(s) C_{ij}^k(s) & = \frac{ma}{R(J_1+J_n+ma^2)}( -K_{jl}  s_i   +K_{il}s_j) \\
&=\frac{ma}{R^3} \left ( -s_i \delta_{jl} +s_j \delta_{il} \right ) \\
&=\left \langle \left [  \mbox{  hor}_q \left ( \frac{\partial}{\partial s_i}  \right ) , 
 \mbox{  hor}_q \left ( \frac{\partial}{\partial s_j}  \right ) \right ] \,, \,  \mbox{  hor}_q \left ( \frac{\partial}{\partial s_l}  \right ) 
 \right \rangle_q,
\end{split}
\end{equation*}
where the last identity follows from \eqref{eq:triple-product-example}. This completes the proof since the above
equation is equivalent to \eqref{eq:Coeffs}.
\end{proof}

\vspace{1cm}

\noindent \textbf{Acknowledgements:} I acknowledge the Alexander von Humboldt Foundation  for a  Georg Forster Experienced Researcher Fellowship   
  that funded a research visit to TU Berlin where this work was done.

I am extremely grateful to Y. Suris for numerous conversations during my research stay at TU Berlin.
 It was him who, in the course of these conversations, recognised the potential relevance of hypothesis \ref{hyp} and encouraged me
to investigate it further.

I also thank  J. Koiller and Y. Fedorov for their comments on an early version of this paper and 
J.C. Marrero for  conversations that were useful to understand the intrinsic
character of the gyroscopic coefficients. I thank C. Fern\'andez for her help to produce Figure~\ref{F:fig1}.

Finally, I express my gratitude to the anonymous referees of the paper who helped me improve the exposition with their
useful comments and remarks.

\vskip 1cm

\noindent
 LGN: Departamento de Matem\'aticas y Mec\'anica, IIMAS-UNAM.
Apdo. Postal 20-126, Col. San \'Angel,
Mexico City, 01000, Mexico. luis@mym.iimas.unam.mx

\end{document}